\newtheorem{lemma}{Lemma}
\newtheorem{definition}{Definition}
\newtheorem{corollary}{Corollary}
\newtheorem{proposition}{Proposition}
\newtheorem{example}{Example}
\newtheorem{theorem}{Theorem}
\newtheorem{conjecture}{Conjecture}
\newtheorem{question}{Question}
\newcommand{\Z}{\mathbb{Z}}
\newcommand{\N}{\mathbb{N}}
\newcommand{\A}{\mathcal{A}}
\newcommand{\B}{\mathcal{B}}
\newcommand{\F}{\mathcal{F}}
\newcommand{\SIGMA}{\mathrm{\Sigma}}
\newcommand{\PI}{\mathrm{\Pi}}
\newcommand{\diamondlord}[4]{
\fill[#4,opacity=.15] (#1+#3,#2) -- (#1,#2+#3) -- (#1-#3,#2) -- (#1,#2-#3);
\draw[#4,thick] (#1+#3,#2) -- (#1,#2+#3) -- (#1-#3,#2) -- (#1,#2-#3) -- cycle;
}
\title{On Derivatives and Subpattern Orders of Countable Subshifts\thanks{Research supported by the Academy of Finland Grant 131558}}
\author{
	Ville Salo
		\institute{TUCS -- Turku Centre for Computer Science, Finland, \\
		University of Turku, Finland}
		\email{vosalo@utu.fi}
	\and
	Ilkka T\"orm\"a
		\institute{University of Turku, Finland}
		\email{iatorm@utu.fi}
}
\begin{document}
\maketitle

\begin{abstract}
We study the computational and structural aspects of countable two-dimensional SFTs and other subshifts. Our main focus is on the topological derivatives and subpattern posets of these objects, and our main results are constructions of two-dimensional countable subshifts with interesting properties. We present an SFT whose iterated derivatives are maximally complex from the computational point of view, a sofic shift whose subpattern poset contains an infinite descending chain, a family of SFTs whose finite subpattern posets contain arbitrary finite posets, and a natural example of an SFT with infinite Cantor-Bendixon rank.
\end{abstract}


\section{Introduction}

In this article, we study the computational and structural aspects of countable two-dimensional SFTs, with an emphasis on properties of the topological derivative and the so-called subpattern order. Our approach is mainly constructive, in that our main results are examples of subshifts with interesting properties. The study of computational aspects of tilings started with the observation of Wang that from a seed tile, one can simulate a Turing machine by simply drawing its run on the set of tilings. It was conjectured by Wang that without a seed tile, computation cannot be forced, and even that there in fact exists a periodic tiling. The tile sets of \cite{Be66} and \cite{Ro71} provide counterexamples, and further show that a Turing machine can be forced to run on every tiling by using a self-similar construction. In the case that there exist only countably many tilings, the situation is different: since every dynamical system contains a minimal subsystem (which is then the orbit closure of a single point), a countable subshift must contain a periodic point. However, Turing machines can still be run on such tilings, so that configurations not containing a seed tile form a small recursive set. From this observation, a wealth of interesting behavior emerges \cite{JeVa11}.

In \cite{BaDuJe08}, the notion of topological derivative is shown to be very useful for studying structural properties of countable two-dimensional SFTs. Namely, it is straightforward that such an SFT will eventually become empty when the derivative is iterated transfinitely, and at the very end of this process, simple structure must emerge: Unless the SFT in question is in fact finite, the second-to-last nonempty derivative must contain a point with exactly one vector of periodicity, and the last level must contain at least two fully periodic points.

The main open problem in \cite{BaDuJe08} about topological derivatives asks for a countable two-dimensional SFT with infinite rank. This has been completely solved in \cite{JeVa11}, and we give an independent weaker solution to this problem, hopefully showing a more natural example of how this type of behavior may occur in a countable two-dimensional SFT.

Another notion studied in \cite{BaDuJe08} is the partial order induced by subpattern inclusion: we say $x \succcurlyeq y$ if and only if all patterns seen in $y$ are also seen somewhere in $x$. Not much is known about this partial order in the class of countable two-dimensional SFTs, in particular (as far as we are aware) it is still open whether there exists such an SFT with an infinite descending chain. We solve this in the positive for two-dimensional sofic shifts, by finding such a chain in a countable sofic shift. While we cannot solve the descending chain problem for SFTs, we find rich structure in this class of partial orders by order-embedding every finite partially ordered set in the partial order of a countable SFT.

In addition to addressing the questions of \cite{BaDuJe08}, we study the computational complexity of the derivative of a countable two-dimensional SFT. The extension problem of such an SFT (that is, solving whether a given pattern $P$ occurs in a valid configuration) is easily seen to be $\PI^0_1$, and the complexity of this problem may increase by at most two levels in the arithmetical hierarchy when the derivative is taken. We show a converse to this: there exists a countable two-dimensional SFT whose $k$th derivative is $\PI^0_{2k+1}$-complete for all $k$. This implies that while the very last levels of the derivation process have simple structure, during the process the complexity may rise arbitrarily high and only decrease in a limit ordinal.

The structure of our paper is as follows. In Section~\ref{sec:Definitions}, we give the relevant definitions and notation used in the rest of the article. In Section~\ref{sec:Ranks}, we give our results about ranks attainable from subshifts in one and two dimensions. We solve the one-dimensional sofic case, which is drastically different from its two-dimensional counterpart. For the case of countable two-dimensional SFTs, we obtain transfinite rank with a natural subshift, which does not directly involve any kind of encoding of computation. In Section~\ref{sec:DerivativeComputability}, we show that derivatives of countable two-dimensional SFTs can climb arbitrarily high in the arithmetical hierarchy. In Section~\ref{sec:SubpatternOrder}, we order-embed an arbitrary finite poset in the subpattern poset of a countable two-dimensional SFT, and find a countable sofic shift with an infinite descending chain.

\section{Definitions and Notation}
\label{sec:Definitions}

Let $S$ be a finite set of \emph{symbols}, called the \emph{alphabet}, endowed with the discrete topology. For an integer dimension $d \geq 1$, the set $S^{\Z^d}$, equipped with the product topology, is called the \emph{$d$-dimensional full shift on $S$}. Elements $x$ of $S^{\Z^d}$ are called \emph{configurations}. A configuration $x \in S^{\Z^d}$ is \emph{unary} if there exists $s \in S$ with $x_{\vec n} = s$ for all $\vec n \in \Z^d$. A \emph{pattern over $S$} is a pair $(D,s)$, where $D \subset \Z^d$ is a finite \emph{domain}, and $s : D \to S$ gives the arrangement of symbols in $D$. A pattern $P = (D,s)$ \emph{occurs} in a configuration $x$, denoted $P \sqsubset x$, if we have $x_{D + \vec{n}} = P$ for some $\vec{n} \in \Z^d$. For all $k \in [1,d]$, we define the \emph{shift map} $\sigma_k : S^{\Z^d} \to S^{\Z^d}$ by $\sigma_k(x)_{\vec n} = x_{\vec n + e_k}$, where $\{e_1,\ldots,e_d\}$ is the natural generator set of $\Z^d$.

A \emph{$d$-dimensional subshift over $S$} is a closed subset $X \subset S^{\Z^d}$ satisfying $\sigma_k(X) = X$ for all $k \in [1,d]$. Alternatively, all subshifts $X$ can be defined by a set $\F$ of \emph{forbidden patterns} as $X = \{ x \in S^{\Z^d} \;|\; \forall P \in \F: P \not\sqsubset x \}$. If $\F$ is finite, then $X$ is said to be \emph{of finite type} (SFT for short). Given a finite domain $D \subset \Z^d$, the set of patterns occurring in the points of a subshift $X$ with domain $D$ is denoted $\B_D(X)$, the set of all patterns of $X$ is $\bigcup_D \B_D(X) = \B(X)$, and the set of symbols occurring in $X$ is denoted $\A(X)$. A \emph{block map} is a continuous mapping $f : X \to Y$, where $X$ and $Y$ are $d$-dimensional subshifts (possibly over different alphabets), which intertwines the shift maps of $X$ and $Y$: $f \circ \sigma_k = \sigma_k \circ f$ for all $k \in [1,d]$. Alternatively, a block map $f$ can be defined by a \emph{local function} $F : \B_D(X) \to \A(Y)$ by $f(x)_{\vec n} = F(x)_{D + \vec n}$ for all $x \in X$ and $\vec n \in \Z^d$, where $D$ is a finite domain, called the \emph{neighborhood} of $f$ \cite{He69}. An image of a subshift under a block map is a subshift, and images of SFT's are called \emph{sofic shifts}. See \cite[Section 13.10]{LiMa95} for a short survey on multidimensional symbolic dynamics.

A \emph{preordered set} is a tuple $(S,\geq)$, where $\geq$ is a binary relation on $S$ which is reflexive ($x \geq x$ holds for all $x$) and transitive ($x \geq y$ and $y \geq z$ imply $x \geq z$). A \emph{partial order} is a preorder which is antisymmetric ($x \geq y$ and $y \geq x$ imply $x = y$). A partially ordered set is called a \emph{poset}. An \emph{order-embedding} between two partially ordered sets $(S,\geq)$ and $(T,\succcurlyeq)$ is a function $f : S \to T$ such that for all $x,y \in S$ we have $x \geq y$ iff $f(x) \succcurlyeq f(y)$.

We define a preorder, called the \emph{subpattern order}, on the configurations of $S^{\Z^d}$ by stating that $x \succcurlyeq y$ holds iff $P \sqsubset y$ implies $P \sqsubset x$ for all patterns $P$, meaning that $x$ contains all patterns of $y$. This notion was first introduced in \cite{Du99}. If $x \succcurlyeq y$ and $x \preccurlyeq y$, we denote $x \approx y$, and if $x \succcurlyeq y$ and $x \not\approx y$, we denote $x \succ y$. The \emph{subpattern poset} of a subshift $X \subset S^{\Z^d}$ is the poset $(X/\!\!\approx,\succcurlyeq)$, where $\approx$-equivalent elements of $X$ are identified.

Given a topological space $X$, the \emph{Cantor-Bendixon derivative of $X$} (see e.g. \cite{Ku66}, first considered for SFTs in \cite{BaDuJe08}) is defined as $X' = \{ x \in X \;|\; x \in \overline{X - \{x\}} \}$. Thus $X'$ consists of the nonisolated points of $X$. Note that a nonisolated point of $X$ might become isolated in $X'$, and thus it makes sense to inductively define the $\lambda$th derivative of $X$ for all ordinals $\lambda$. First, $X^{(0)} = X$. If $\lambda = \alpha + 1$, then $X^{(\lambda)} = (X^{(\alpha)})'$, and if $\lambda$ is a limit ordinal, then $X^{(\lambda)} = \bigcap_{\alpha < \lambda} X^{(\alpha)}$. The \emph{rank} of $X$ is the least ordinal $\lambda$ such that $X^{(\lambda)} = X^{(\lambda + 1)}$. If $X$ is a subshift, then so is $X^{(\lambda)}$ for all $\lambda$, and if $X' \subsetneq X$, then $X'$ contains strictly less patterns than $X$. Since the set of all patterns is countable, the rank of $X$ exists and is a countable ordinal.

Let $\phi$ be a formula in first-order arithmetic. If $\phi$ contains only bounded quantifiers, then we say $\phi$ is $\SIGMA^0_0$ and $\PI^0_0$. For all $n > 0$, we say $\phi$ is $\SIGMA^0_n$ if it is equivalent to a formula of the form $\exists k: \psi$ where $\psi$ is $\PI^0_{n-1}$, and $\phi$ is $\PI^0_n$, if it is equivalent to a formula of the form $\forall k: \psi$ where $\psi$ is $\SIGMA^0_{n-1}$. This classification is called the \emph{arithmetical hierarchy} (see e.g. \cite[Chapter IV.1]{Od89} for an introduction to the topic). A subset $X$ of $\N$ is $\SIGMA^0_n$ or $\PI^0_n$, if $X = \{ x \in \N \;|\; \phi(x) \}$ for some $\phi$ with the corresponding classification. It is known that the $\SIGMA^0_1$ sets are exactly the recursively enumerable sets, and the $\PI^0_1$ sets their complements. When classifying sets of objects other than natural numbers (e.g. patterns), we assume that the objects are in some natural and computable bijection with $\N$. Also, a subshift is given the same classification as its language, so that, for example, two-dimensional SFTs are $\PI^0_1$ subshifts. See \cite{CeRe98} for a general survey on $\PI^0_1$ sets. The nonstandard quantifier $\exists^\infty n : \phi(n)$ has the meaning `there exist infinitely many $n$ such that $\phi(n)$.'

A subset $X \subset \N$ is \emph{many-one reducible} (or simply \emph{reducible}) to another set $Y \subset \N$, if there exists a computable function $f : \N \to \N$ such that $x \in X$ iff $f(x) \in Y$. If every set in a class $\mathcal{C}$ is reducible to $X$, then $X$ is said to be \emph{$\mathcal{C}$-hard}. If, in addition, $X$ is in $\mathcal{C}$, then $X$ is \emph{$\mathcal{C}$-complete}.

In the proof of one of our results, we utilize \emph{counter machines}, which we define here informally (\cite{Mi67} is the classical reference). A counter machine $M$ consists of a finite state set $\Sigma$ and a finite set of counters, each of which holds a value in $\N$. On a single step, the machine increments or decrements some of its counters by $1$ and goes to a new state $s \in \Sigma$, depending on its previous state and which of its counters contained the value $0$. This action may also be nondeterministic. A counter machine can be used to simulate a Turing machine, and thus to execute any algorithm.

\section{Ranks of Subshifts}
\label{sec:Ranks}

Ranks of subshifts have usually been studied in the countable case, where we have the following basic result:

\begin{lemma}[\cite{BaDuJe08}]
\label{lem:CountableRanked}
A subshift $X$ has $X^{(\lambda)} = \emptyset$ for some ordinal $\lambda$ if and only if it is countable.
\end{lemma}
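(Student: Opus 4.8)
The plan is to locate the stabilized derivative $Y := X^{(\lambda)}$, where $\lambda$ is the rank of $X$ (which the discussion preceding the lemma guarantees to exist and to be a countable ordinal). By definition of the rank, $X^{(\lambda)} = X^{(\lambda+1)} = (X^{(\lambda)})'$, so $Y$ equals its own Cantor-Bendixon derivative and hence has no isolated points, i.e.\ $Y$ is \emph{perfect}. The theorem then splits into two implications, each governed by the dichotomy between the perfect kernel $Y$ and the scattered part $X \setminus Y$.

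For the direction ``empty derivative implies countable,'' suppose $X^{(\mu)} = \emptyset$ for some ordinal $\mu$. Since the derivatives are nested and decreasing and stabilize at $Y$, this forces $Y = \emptyset$, and every point of $X$ is discarded at some successor stage, giving $X = \bigcup_{\alpha < \lambda} (X^{(\alpha)} \setminus X^{(\alpha+1)})$. Each difference $X^{(\alpha)} \setminus X^{(\alpha+1)}$ is exactly the set of points isolated in $X^{(\alpha)}$, and I would bound it using the second countability of the full shift $S^{\Z^d}$: for each such isolated point $p$ there is a basic cylinder $B_p$ with $B_p \cap X^{(\alpha)} = \{p\}$, and distinct isolated points receive distinct cylinders, so there are at most countably many. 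As $\lambda$ is a countable ordinal, $X$ is a countable union of countable sets, hence countable.

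For the converse ``countable implies empty derivative,'' I would argue contrapositively by showing that any nonempty perfect closed subset of $S^{\Z^d}$ is uncountable. Assuming $Y \neq \emptyset$, I would build a Cantor scheme: for each finite word $w \in \{0,1\}^*$ a point $x_w \in Y$ and a finite domain $D_w$, arranged so that $x_{w0}$ and $x_{w1}$ agree with $x_w$ on $D_w$ but differ from one another on a strictly larger domain. The branching step is possible precisely because $Y$ has no isolated points, so every cylinder meeting $Y$ meets it in at least two points. Each infinite branch $b \in \{0,1\}^{\N}$ then produces a Cauchy sequence $(x_{b|_n})_{n}$ whose limit lies in $Y$ by closedness, and distinct branches yield distinct limits, so this defines an injection of $2^{\N}$ into $Y$. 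Thus $Y$, and a fortiori $X$, would be uncountable; contradicting countability of $X$ forces $Y = \emptyset$, that is, $X^{(\lambda)} = \emptyset$.

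The main obstacle is the converse: exhibiting the uncountable family inside the perfect kernel. The remaining steps are bookkeeping with the (already granted) countability of the rank and with the second countability of the full shift, whereas the substantive content is the Cantor-scheme construction, where one must keep the domains $D_w$ growing and the two children genuinely separated so that the resulting limit map is injective. This is the classical Cantor-Bendixon mechanism specialized to the compact metrizable full shift.
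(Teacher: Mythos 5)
The paper offers no proof of this lemma to compare against: it is quoted directly from \cite{BaDuJe08}, being the classical Cantor--Bendixson theorem specialized to subshifts. Judged on its own merits, your proof is correct and follows the standard argument. The forward direction correctly decomposes $X$ as the union, over $\alpha < \lambda$, of the sets of points isolated in $X^{(\alpha)}$, injects each such set into the countable family of cylinders, and invokes the countability of the rank $\lambda$ granted by the discussion preceding the lemma; the converse correctly reduces to showing that the perfect kernel $Y = X^{(\lambda)}$, if nonempty, admits an injection from $2^\N$ via a Cantor scheme. One piece of bookkeeping you should make explicit: for the sequence $(x_{b|_n})_n$ along a branch to be Cauchy it is not enough that the domains $D_w$ grow, they must exhaust $\Z^d$ (e.g.\ require $D_w \supseteq [-|w|,|w|]^d$), since agreement on an increasing but bounded family of domains says nothing about the configurations far from the origin. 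Alternatively, compactness of $S^{\Z^d}$ lets you take a cluster point of each branch instead of a limit: any cluster point agrees with $x_{b|_n}$ on $D_{b|_n}$ for every $n$, because agreement on a finite domain is a clopen condition satisfied by all but finitely many terms, and distinct branches still yield distinct points since they separate on the domain where their two children were forced to differ. With that adjustment the injection into $Y$ is complete and the argument is sound.
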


We note here that if a configuration $x$ is isolated in a subshift $X$, then there exists a pattern $P = (D,s) \in \B(X)$ such that $x$ is the only element of $X$ with $x_D = P$. We use this intuition in many of our proofs.

First, we look at one-dimensional sofic shifts. They have a useful well-known characterization in terms of the different contexts of words that appear in them. We give the characterization without proof, but for example, it easily follows from Theorem 3.2.10 of \cite{LiMa95}.

\begin{definition}
The \emph{context} of a word $v \in S^*$ in a subshift $X \subset S^\Z$ is $C_X(v) = \{ (w,w') \in (S^*)^2 \;|\; wvw' \sqsubset X \}$.
\end{definition}

\begin{lemma}
A subshift $X \subset S^\Z$ is sofic if and only if it has a finite number of different contexts.
\end{lemma}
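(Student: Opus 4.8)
The plan is to prove this characterization of sofic shifts in terms of the finiteness of the number of distinct contexts $C_X(v)$, and I would establish the two implications separately.

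For the forward direction, I would assume $X \subset S^\Z$ is sofic, so there is an SFT $Y$ over some alphabet $T$ and a block map $f : Y \to X$ with local function $F : \B_D(Y) \to S$. By passing to a higher block presentation if necessary, I would arrange that $Y$ is described by a finite set of forbidden patterns of a bounded window size, and that the neighborhood $D$ of $f$ is a single coordinate window, so that the essential finite-state information about a partial configuration of $Y$ is captured by a bounded suffix of symbols of $Y$. The key observation is that the context $C_X(v)$ of a word $v = f(u)$ is determined entirely by the finite set of pairs of boundary states of $Y$ (the bounded-length left and right overlap symbols of the preimages $u$ mapping onto $v$) that can be extended consistently in $Y$. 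Since $Y$ is an SFT with bounded memory, there are only finitely many such sets of boundary-state pairs, and hence only finitely many possible contexts $C_X(v)$ as $v$ ranges over $\B(X)$.

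For the converse, I would assume $X$ has only finitely many distinct contexts and build an SFT cover. The idea is to take the finite set $\mathcal{C} = \{ C_X(v) \mid v \in \B(X) \}$ of contexts as the state set of an automaton, where a word $v$ is sent to its context, and the transition on reading a symbol $a \in S$ takes $C_X(v)$ to $C_X(va)$; this is well-defined precisely because $C_X(va)$ depends only on $C_X(v)$ and $a$, a point that needs verification but follows from the definition of context. This automaton recognizes exactly $\B(X)$, and the associated edge shift (the SFT whose points are the bi-infinite paths labelled by symbols of $S$) maps onto $X$ via the labelling block map, exhibiting $X$ as sofic.

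The main obstacle I expect is in the forward direction, specifically in making precise the claim that the context of $v$ in $X$ is a function of finitely much boundary information in the SFT cover $Y$. One has to argue carefully that whether $wvw' \sqsubset X$ holds can be decided by checking whether there exist preimages in $Y$ of $w$, $v$, $w'$ whose overlapping boundary symbols are compatible and whose internal patterns avoid all forbidden patterns of $Y$ — and that this compatibility data ranges over a finite set independent of the lengths of $w, v, w'$. Since the referenced result \cite[Theorem 3.2.10]{LiMa95} already supplies this characterization, I would lean on it for the technical bookkeeping and present only the conceptual skeleton above; indeed, the excerpt states the lemma is given without proof precisely for this reason.
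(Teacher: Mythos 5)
The paper offers no proof of this lemma at all --- it states the characterization and points to Theorem 3.2.10 of \cite{LiMa95} (the follower-set characterization of sofic shifts) --- so there is no line-by-line comparison to make; your sketch is essentially the standard argument behind that citation, adapted from one-sided follower sets to the paper's two-sided contexts, and both directions are sound in outline. If you were to write it out in full, two points need tightening. In the forward direction, ``can be extended consistently in $Y$'' must mean \emph{global} admissibility: cleanest is to recode $Y$ to an essential edge shift, so that $C_X(v)$ is determined by the finite set of pairs (initial vertex, terminal vertex) of paths presenting $v$; with a general window size $m$ one instead replaces a preimage $\tilde{u}_1$ by another preimage $\tilde{u}_2$ sharing the same length-$(m-1)$ prefix and suffix inside a bi-infinite point, and checks that every length-$m$ window of the spliced point is still legal. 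In the converse direction, your transition rule is indeed well defined, since $(w,w') \in C_X(va)$ iff $(w,aw') \in C_X(v)$, but two things are glossed over: the empty context must be excluded from the state set (otherwise bi-infinite paths through that sink state carry labels outside $X$; e.g.\ for $X = \{ {}^\infty 0^\infty \}$ one could read $1$'s forever), and the assertion that the labelled bi-infinite paths are \emph{exactly} $X$ needs both halves --- containment in $X$ because every finite subword of such a label lies in $\B(X)$ and a subshift is determined by its language, and surjectivity (every $x \in X$ lifts to a bi-infinite path) by a compactness/K\"onig's-lemma argument, or by noting that the decreasing sequence of contexts $C_X(x_{[i-n,i]})$ stabilizes as $n \to \infty$ since there are only finitely many of them. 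None of this is a wrong turn: it is precisely the bookkeeping that both you and the paper delegate to the cited theorem.
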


We now relate the contexts of words in a subshift and its derivative. While we prove this result only in the case of one-dimensional subshifts, since this is exactly what we need, the obvious generalization holds for subshifts of all dimensions.

\begin{lemma}
\label{lemma:SameContexts}
For a subshift $X \subset S^\Z$,
\[ C_X(u) = C_X(v) \implies C_{X^{(1)}}(u) = C_{X^{(1)}}(v). \]
\end{lemma}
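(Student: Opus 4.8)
The plan is to prove the contrapositive in spirit, by establishing a direct relationship between points of $X^{(1)}$ and points of $X$. The key observation I would exploit is the remark made just after Lemma~\ref{lem:CountableRanked}: a point $x \in X$ fails to lie in $X^{(1)}$ precisely when it is isolated, i.e. when there is a pattern $P = (D,s) \in \B(X)$ witnessed \emph{only} by $x$ (so $x_D = P$ pins down $x$ uniquely in $X$). Contrapositively, $x \in X^{(1)}$ iff every finite window of $x$ is shared by some other point of $X$. I would phrase this as: $x \in X^{(1)}$ iff for every finite domain $D$ there exist infinitely many (equivalently, at least two) points of $X$ agreeing with $x$ on $D$, and then refine it to the local statement I actually need below.

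The heart of the argument is the following claim: a bi-infinite word (configuration) $y \in X$ belongs to $X^{(1)}$ if and only if every finite subword $w$ of $y$ has $C_X(w)$ infinite, or more usefully, if and only if for each position and each radius the central window of $y$ extends in $X$ in more than one way to the right or to the left arbitrarily far. First I would make precise the statement that $C_X(u) = C_X(v)$ means $u$ and $v$ are interchangeable inside any point of $X$: whenever a configuration of $X$ contains an occurrence of $u$, replacing that occurrence by $v$ yields another configuration of $X$, and vice versa. This \emph{substitution property} is the engine of the whole proof, and it is exactly what ``same context'' gives us.

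With the substitution property in hand, the implication reduces to a transport argument. Suppose $C_X(u) = C_X(v)$, and let $(w,w') \in C_{X^{(1)}}(u)$; I must show $(w,w') \in C_{X^{(1)}}(v)$. So there is a point $x \in X^{(1)}$ containing an occurrence of $wuw'$; I want to produce a point $x'' \in X^{(1)}$ containing $wvw'$. Using the substitution property, replacing the $u$ in $x$ by $v$ produces a point $x' \in X$ containing $wvw'$. The only remaining task, which is the main obstacle, is to verify that $x'$ is again nonisolated, i.e.\ that $x' \in X^{(1)}$. Here I would argue that nonisolation is itself a local-to-global property governed by contexts: because $x$ is nonisolated, every finite window of $x$ admits another $X$-point agreeing on it, and I transport each such witness through the same single substitution $u \mapsto v$ to obtain, for every window of $x'$, a distinct companion point of $X$. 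Care is needed near the substituted block, but since $C_X(u) = C_X(v)$ the companions of windows overlapping the $u$-occurrence in $x$ map to valid companions of the corresponding windows of $x'$; windows disjoint from the substitution are unaffected. Hence no finite window pins $x'$ down uniquely, so $x'$ is nonisolated and lies in $X^{(1)}$.

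The subtle point I expect to require the most care is the handling of windows that \emph{straddle} the replaced occurrence of $u$: I must ensure that distinctness of the companion points survives the substitution and that the substitution can be performed simultaneously with the local perturbation witnessing nonisolation. The cleanest way to manage this is to fix, for each radius $r$, a point $x_r \neq x$ of $X$ agreeing with $x$ on the window of radius $r$ about the origin, apply the substitution uniformly (the same $u \mapsto v$ at the same coordinates), and check via $C_X(u) = C_X(v)$ that the resulting $x_r''$ lies in $X$, agrees with $x'$ on the radius-$r$ window, and remains distinct from $x'$. Letting $r \to \infty$ then exhibits $x'$ as a limit of distinct points of $X$, i.e.\ $x' \in X^{(1)}$, completing the proof. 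The symmetry of the hypothesis $C_X(u)=C_X(v)$ immediately yields the reverse inclusion $C_{X^{(1)}}(v) \subseteq C_{X^{(1)}}(u)$, giving equality.
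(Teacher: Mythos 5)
Your proposal is correct and takes essentially the same route as the paper: the paper's one-line ``bijective correspondence'' between the points of $X$ containing $wuw'$ at a fixed position and those containing $wvw'$ is precisely your substitution map $u \mapsto v$ derived from $C_X(u) = C_X(v)$, and both arguments then reduce membership of a word in $\B(X^{(1)})$ to the multiplicity of points of $X$ carrying that word at a fixed position---the paper packaging this as a finiteness contradiction, you as a direct transport of approximating sequences witnessing nonisolation. One minor blemish that does not affect correctness: your side claim that $y \in X^{(1)}$ iff every finite subword $w$ of $y$ has $C_X(w)$ infinite is vacuous (every word of $\B(X)$ has contexts of all lengths, so the condition holds for every point of a nonempty subshift), but you never use it; the characterization your argument actually relies on---for every finite window of $y$ there is a second point of $X$ agreeing with $y$ on that window---is the right one.
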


\begin{proof}
Let $C_X(u) = C_X(v)$, and suppose that $(w, w') \in C_{X^{(1)}}(u) - C_{X^{(1)}}(v)$. Then $wvw' \not\sqsubset X^{(1)}$, so the set of points $x \in X$ with $x_{[0,|wvw'|-1]} = wvw'$ is finite. But since $C_X(u) = C_X(v)$, these are in a bijective correspondence with the points $y$ such that $y_{[0,|wuw'|-1]} = wuw'$, which implies that $wuw' \not\sqsubset X^{(1)}$, a contradiction. 
\end{proof}

This implies that the number of different contexts cannot increase in the derivative, so the two previous lemmas give the following:

\begin{corollary}
\label{cor:1DSoficNiceness}
The derivative of a one-dimensional sofic shift is sofic.
\end{corollary}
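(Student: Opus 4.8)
The plan is to combine the two preceding results directly. Corollary~\ref{cor:1DSoficNiceness} asserts that if $X \subset S^\Z$ is sofic, then so is its derivative $X^{(1)} = X'$. By the soficity characterization (the lemma stating that a one-dimensional subshift is sofic iff it has finitely many distinct contexts), soficity is exactly the finiteness of the set $\{C_X(v) \;|\; v \in S^*\}$ of contexts. So the entire task reduces to showing that this set cannot grow under the derivative operation.

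First I would fix a sofic $X$ and note that, being sofic, it has only finitely many distinct contexts, say $C_X(v_1), \ldots, C_X(v_m)$. The key observation is that Lemma~\ref{lemma:SameContexts} gives a well-defined surjection from the contexts of $X$ onto the contexts of $X^{(1)}$: for each word $v$, the context $C_{X^{(1)}}(v)$ is determined by $C_X(v)$, since $C_X(u) = C_X(v)$ forces $C_{X^{(1)}}(u) = C_{X^{(1)}}(v)$. Therefore any two words sharing a context in $X$ also share a context in $X^{(1)}$, which means the number of distinct contexts in $X^{(1)}$ is at most the number in $X$, hence finite. By the characterization lemma again, finiteness of the context set means $X^{(1)}$ is sofic. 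This is the whole argument, and it is essentially a bookkeeping step once the two lemmas are in hand.

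I do not anticipate any genuine obstacle here, since the corollary is an immediate consequence of the material already proved; the substantive work was done in establishing Lemma~\ref{lemma:SameContexts}. The one point that deserves explicit articulation is the logical structure of the implication in Lemma~\ref{lemma:SameContexts}: it says that the partition of $S^*$ into context-classes for $X^{(1)}$ is \emph{coarser} than (a refinement of — in the sense of being coarser than) that for $X$, i.e.\ the $X$-context of a word determines its $X^{(1)}$-context. If a subtlety arises at all, it is only in making sure one states the direction of the inequality correctly: fewer $X$-contexts constrains us to fewer $X^{(1)}$-contexts, never more. With that clarified, the conclusion follows.
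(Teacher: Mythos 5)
Your proof is correct and follows exactly the paper's route: Lemma~\ref{lemma:SameContexts} shows the $X$-context of a word determines its $X^{(1)}$-context, so the number of distinct contexts cannot increase under derivation, and the characterization of soficity via finitely many contexts finishes the argument. The paper treats this as an immediate consequence of the two lemmas, just as you do.
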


A simple further analysis proves the following result.

\begin{proposition}
\label{prop:SoficFinite}
All one-dimensional sofic shifts have finite rank.
\end{proposition}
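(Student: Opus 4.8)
The plan is to show that the rank of a one-dimensional sofic shift is bounded by some finite quantity controlled by the number of distinct contexts. By Corollary~\ref{cor:1DSoficNiceness}, every derivative $X^{(k)}$ of a sofic shift $X$ is again sofic, and hence by the preceding characterization has finitely many contexts. By Lemma~\ref{lemma:SameContexts}, the equivalence relation ``$C_{X}(u) = C_{X}(v)$'' refines under derivation into ``$C_{X^{(1)}}(u) = C_{X^{(1)}}(v)$'', so the number of distinct contexts is a nonincreasing function of the derivative index. This monotone integer quantity is the natural candidate for a measure that must stabilize, and my first step would be to make precise that it can be used to bound the rank.

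Next I would argue that as long as the derivative strictly decreases, the number of contexts must strictly decrease as well, which forces termination after finitely many steps. Concretely, suppose $X^{(k+1)} \subsetneq X^{(k)}$. The strict inclusion means some pattern (word) that appears in $X^{(k)}$ no longer appears in $X^{(k+1)}$; the goal is to leverage this to produce two words $u, v$ that have the same context in $X^{(k+1)}$ but had different contexts in $X^{(k)}$ --- equivalently, to show that the decrease in patterns collapses at least one context class. Combined with the nonincreasing bound and the fact that the number of contexts is always a positive integer (the empty subshift being the degenerate terminating case), this would cap the number of strict decreases, hence the rank, by the initial number of contexts.

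The main obstacle I expect is the converse direction just sketched: Lemma~\ref{lemma:SameContexts} gives only that contexts can merge, not that a strict drop in the subshift forces a strict drop in the context count. A priori, $X^{(k+1)}$ could be a proper subshift of $X^{(k)}$ with exactly the same number of contexts, in which case the context count alone would not witness termination. To handle this I would look for a secondary, lexicographically subordinate measure --- for instance, within each fixed context class, the (finite) number of points of $X^{(k)}$ realizing a given boundary word, or more cleanly the structure of the finite automaton recognizing the sofic shift. Each derivative removes isolated points, which correspond to words with finite context realizations; tracking how these finite pieces shrink should give a well-founded measure even when the context count is momentarily constant.

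Concretely, I would recall that a one-dimensional sofic shift decomposes, via its context automaton, into finitely many irreducible (strongly connected) components together with transient transitions between them, and that a point is isolated precisely when it is eventually confined to transient behavior or to trivial cycles. The derivative then prunes the ``edges'' of this finite combinatorial structure, and since the structure is finite, only finitely many prunings are possible before reaching a fixed point. Thus the rank is bounded by a quantity depending on the size of the recognizing automaton, and in particular is finite. I would close by remarking that this makes the one-dimensional sofic case drastically simpler than the two-dimensional situation developed later in the paper, where arbitrarily large and even transfinite ranks occur.
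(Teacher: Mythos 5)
You set up exactly the paper's framework (Corollary~\ref{cor:1DSoficNiceness} plus Lemma~\ref{lemma:SameContexts}, with the context count as the terminating measure), but the step you flag as ``the main obstacle'' --- showing that $X^{(k+1)} \subsetneq X^{(k)}$ forces the context count to drop --- is precisely the step you never carry out, and it is where the paper's entire (very short) argument lives. The resolution is a one-line trick you missed: if $X \neq X^{(1)}$, then $X$ cannot be the full shift (the full shift is perfect), so there is a word $u \notin \B(X)$, and its context $C_X(u)$ is empty; on the other hand, since $X^{(1)}$ has strictly fewer patterns than $X$, there is a word $v \in \B(X) - \B(X^{(1)})$, and $C_X(v) \neq \emptyset$ because $v \sqsubset X$. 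Hence $C_X(u) \neq C_X(v)$, while $C_{X^{(1)}}(u) = \emptyset = C_{X^{(1)}}(v)$: the class of $v$ merges into the empty-context class upon derivation. Since Lemma~\ref{lemma:SameContexts} guarantees that context classes can only merge, never split, the number of contexts drops by at least one at every strict derivation step, and induction bounds the rank by the number of contexts of $X$. In other words, the scenario you worry about (a strict inclusion $X^{(k+1)} \subsetneq X^{(k)}$ with the same number of contexts) simply cannot occur, and no secondary measure is needed.

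Your fallback argument does not close the gap you left. First, the claim that ``a point is isolated precisely when it is eventually confined to transient behavior or to trivial cycles'' is not a correct characterization: in the orbit closure of $0^\infty 1 0^\infty$, both the isolated points (the shifts of $0^\infty 1 0^\infty$) and the non-isolated point $0^\infty$ are eventually confined to the $0$-cycles, so whichever way one reads ``trivial,'' the criterion misclassifies one of them. Second, and more seriously, the assertion that the derivative ``prunes the edges'' of a fixed presenting automaton presupposes that $X^{(k+1)}$ is presented by a subgraph of a graph presenting $X^{(k)}$; Corollary~\ref{cor:1DSoficNiceness} only gives that $X^{(k+1)}$ is sofic, i.e.\ presented by \emph{some} finite graph, and you offer no argument that this graph can be taken inside the old one. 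Without that, ``only finitely many prunings are possible'' is unjustified, and your other candidate measure (the number of points realizing a given word, which can be infinite) is not well defined. So as written the proof has a genuine hole exactly at the termination argument; the empty-context observation above repairs it along the paper's own lines.
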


\begin{proof}
Let $X$ be sofic, and let $k$ be the number of different contexts in $X$. If $X \neq X^{(1)}$, then necessarily $X \neq S^\Z$, so we may choose $u \notin \B(X)$. Further, choose $v \in \B(X) - \B(X^{(1)})$. Now, $C_X(u) \neq C_X(v)$, but $C_{X^{(1)}}(u) = \emptyset = C_{X^{(1)}}(v)$. By Lemma~\ref{lemma:SameContexts}, $X^{(1)}$ has at most $k - 1$ different contexts. By induction, $X^{(i)}$ has at most $k - i$ different contexts, and it is then clear that $X^{(i)} = X^{(i+1)}$ for some $i \leq k$. 
\end{proof}

It was asked in \cite{BaDuJe08} whether the rank of a countable two-dimensional SFT can be infinite. This problem was completely solved in \cite{JeVa11}, where the possible ranks were proven to be exactly those of $\PI^0_1$ subshifts. Thus the two-dimensional situation is in stark contrast with Proposition~\ref{prop:SoficFinite}. However, to our knowledge, there does not exist an example where an infinite rank arises `naturally', that is, from some simple geometric construction. We prove a weaker version of \cite[Theorem 4.4]{JeVa11} in Example~\ref{ex:Diamonds} with such a natural example, using no direct encoding of a computation.

\begin{example}
\label{ex:Diamonds}
There exists a countable two-dimensional SFT $X$ of rank at least $\omega$.
\end{example}

\begin{proof}
Consider the one-dimensional subshift containing points of the form
\[ {}^\infty 0 a^k 0^{m_1} a^{k-1}b 0^{m_2} a^{k-2}b^2 0^{m_3} \cdots 0^{m_k} b^k 0^\infty, \]
where $k \in \N$ and $m_i \in \N$ for all $i$ are arbitrary. For all $k$, the subshift contains configurations with $k$ `islands' floating in a sea of $0$'s, but no configuration contains an infinite number of islands. This is a countable subshift with infinite rank, and in the following, we construct a two-dimensional SFT $X$ that uses exactly the same idea.

The SFT $X$ contains one infinite horizontal \emph{dedicated line}. The top and bottom halves are colored differently. On the line one may have (perhaps infinite) \emph{diamonds}, colored red and blue, whose left and right corners must be on the dedicated line. The diamonds must be nested, that is, a blue diamond must either contain a red diamond or be contained in one (not both) and vice versa. This is established by sending signals along the dedicated line. Two distinct diamonds may not overlap, unless one is completely inside the other (including a complete overlap). The insides of the diamonds are colored differently from their outsides.

From the top (bottom) corner of every red (blue) diamond, a \emph{decrement signal} is sent to the right (left, respectively). Also, the top (bottom) corner of every red (blue) diamond must absorb one decrement signal traveling one tile above (below) it. The area between the line and a signal is colored differently from its complement. See Figure~\ref{fig:Diamonds} for a clarifying picture.

\begin{figure}[ht]

\begin{center}
\begin{tikzpicture}


\diamondlord{2.4}{0}{1.55}{blue};
\diamondlord{2.5}{0}{1.15}{red};
\diamondlord{6.7}{0}{1.8}{blue};
\diamondlord{6.9}{0}{.9}{red};

\draw[blue,thick,dashed,->] (9,-2.05) -- (6.7,-2.05) -- (6.7,-1.8);
\draw[blue,thick,dashed,->] (6.7,-1.8) -- (2.4,-1.8) -- (2.4,-1.55);
\draw[blue,thick,dashed] (2.4,-1.55) -- (0,-1.55);

\draw[red,thick,dashed,->] (0,1.4) -- (2.5,1.4) -- (2.5,1.15);
\draw[red,thick,dashed,->] (2.5,1.15) -- (6.9,1.15) -- (6.9,.9);
\draw[red,thick,dashed] (6.9,.9) -- (9,.9);

\draw[thick] (0,0) -- (9,0);

\end{tikzpicture}
\end{center}

\caption{The diamond construction.}
\label{fig:Diamonds}
\end{figure}
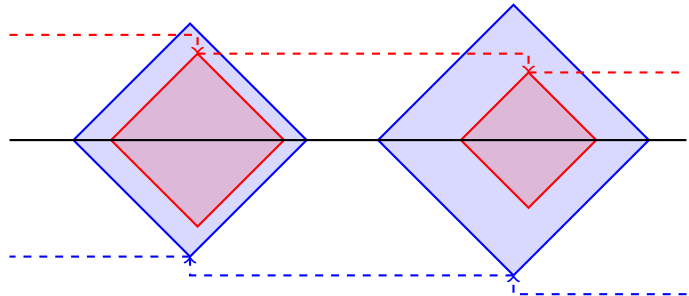

We first show that $X$ is countable. Indeed, for each $(n,m) \in \N^2$, if a configuration $x$ of $X$ contains nested diamonds of sizes $n$ and $m$, then there are at most $n+m-1$ dedicated points in $x$, since the size of the red (blue) diamonds decreases to the right (left). The number of ways to arrange these points and the surrounding diamonds is countable. One can also check that the number of exceptional points (ones containing, say, an infinite diamond or just signals) is countable.

Next, we show that $X^{(\omega)}$ is a nonempty set of finite rank. First, the isolated points of $X$ are exactly those that contain finite red and blue diamonds, and whose rightmost red and leftmost blue diamonds are of size $1$. In general, if $x \in X$ contains finite red and blue diamonds, we say that $x$ \emph{has type $(n,m)$} if the rightmost blue (leftmost left) is of size $n$ ($m$, respectively). It is then easy to see that for all $k \in \N$, the set $X^{(k)}$ will contain all of $X$, except for the points of type $(n,m)$ with $n + m < k$. Then $X^{(\omega)}$ is nonempty, but will consist of only the exceptional points, and clearly $X^{(\omega+k)} = \emptyset$ for some finite $k$. 
\end{proof}

Since the rank of a Cartesian product of subshifts is the Hessenberg sum of their ranks \cite{Ku66} and the product of SFTs is again an SFT, we have, for all $k \in \N$, a countable SFT with rank at least $\omega \cdot k$. This is of course still much weaker than the result of \cite{JeVa11}, where arbitrarily large recursive ordinals were obtained as ranks of two-dimensional SFTs.

\section{Computability Aspects of Derivatives}
\label{sec:DerivativeComputability}

In this section, we study the computational power of the $k$th derivative of a countable two-dimensional SFT, which turns out to possibly climb very high in the arithmetical hierarchy. We start with an upper bound, which we then reach with a construction. A generalization of the following lemma was proved in \cite[Lemma 1.2 (3)]{CeClSmSoWa86}, but we include a proof for completeness.

\begin{lemma}
\label{lem:IncByTwo}
Given a two-dimensional $\PI^0_k$ subshift $X$ and a pattern $P$, it is $\PI^0_{k+2}$ whether $P \sqsubset X^{(1)}$.
\end{lemma}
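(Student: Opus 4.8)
The plan is to reduce the question $P \sqsubset X^{(1)}$ to an arithmetical statement about finite patterns, using the characterization of isolated points recalled after Lemma~\ref{lem:CountableRanked}. Since $X^{(1)}$ is shift invariant, $P \sqsubset X^{(1)}$ is equivalent to the existence of a \emph{non-isolated} point $x \in X$ with $x_D = P$, where $D$ is the domain of $P$ placed at the origin. The key observation is that $x$ is non-isolated exactly when each of its finite windows has at least two extensions to points of $X$: writing $D_n = [-n,n]^2$ for an exhausting sequence of windows, $x \in X^{(1)}$ iff for every $n$ there are at least two points of $X$ agreeing with $x$ on $D_n$.

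First I would set up the tree $T$ whose level-$n$ nodes are the patterns $Q$ on $D_n$ that extend $P$ and admit at least two distinct extensions to points of $X$; call such a $Q$ \emph{forking}. I would check that $T$ is downward closed — if $Q'$ on $D_{n+1}$ is forking then so is its restriction $Q'|_{D_n}$, since every extension of $Q'$ restricts to an extension of $Q'|_{D_n}$ — and that it is finitely branching, as the alphabet is finite. The central claim is then that
\[ P \sqsubset X^{(1)} \iff \forall n\, \exists\, Q \text{ on } D_n : Q \text{ extends } P \text{ and } Q \text{ is forking}. \]
For the forward direction, a non-isolated $x \supseteq P$ yields forking patterns $x_{D_n}$ at every level. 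For the converse, the right-hand side says $T$ has a node at every level; since $T$ is downward closed and finitely branching, this means $T$ is infinite, so by König's lemma it has an infinite branch $(Q_n)_n$. The configuration $x$ defined by $x_{D_n} = Q_n$ is a limit of points of $X$, hence lies in the closed set $X$, satisfies $x_D = P$, and is non-isolated because each $x_{D_n}$ is forking.

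Next I would estimate the complexity of the inner predicates. The quantifier $\exists Q$ on $D_n$ is bounded (finitely many patterns on $D_n$) and "$Q$ extends $P$" is decidable. For the forking condition I would use the reformulation that $Q$ is forking iff there are a position $\vec m$ and symbols $a \neq b$ such that both patterns obtained by writing $a$, respectively $b$, at $\vec m$ on top of $Q$ belong to $\B(X)$: two distinct points extending $Q$ must differ at some cell, and conversely two such extendable patterns yield two distinct points. Since $X$ is $\PI^0_k$, membership in $\B(X)$ is $\PI^0_k$, so the matrix — a finite conjunction of $\PI^0_k$ conditions and decidable ones — is $\PI^0_k$; prefixing the unbounded $\exists \vec m$ and the bounded $\exists a,b$ makes "$Q$ is forking" a $\SIGMA^0_{k+1}$ predicate.

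Assembling the pieces, the body of the displayed formula is $\SIGMA^0_{k+1}$ (bounded quantifiers and decidable conjuncts do not raise the level), and the outermost $\forall n$ yields a $\PI^0_{k+2}$ formula, as claimed. The only genuinely nontrivial step is the tree characterization: the forward implication is immediate, but the converse must invoke compactness (König's lemma together with closedness of $X$) to splice the level-by-level forking patterns into a single non-isolated configuration containing $P$ — without this, the naive statement "for every $n$ there is a forking extension of $P$ of size $n$" need not produce a coherent limit point. The remaining complexity bookkeeping is routine.
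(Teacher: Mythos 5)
Your proof is correct and takes essentially the same route as the paper's: the paper's one-line characterization --- $P \sqsubset X$ and for every $n$ there exist two distinct equal-sized extensions $Q_1, Q_2 \sqsubset X$ of $P$ agreeing on the $(n \times n)$-square around $P$ --- is precisely your ``for all $n$ there is a forking extension of $P$ on $D_n$,'' with your witness $(\vec m, a, b)$ (which should be required to satisfy $\vec m \notin D_n$, as your own justification via ``two points extending $Q$ differ outside $D_n$'' implicitly guarantees) playing the role of the disagreement between $Q_1$ and $Q_2$, and the quantifier count $\forall n \, \exists (\text{finite data}) \, [\PI^0_k]$ yielding $\PI^0_{k+2}$ is identical. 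The only difference is level of detail: you make explicit, via K\"onig's lemma and closedness of $X$, the compactness argument behind the equivalence, which the paper compresses into ``This is clearly $\PI^0_{k+2}$.''
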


\begin{proof}
Given $X$ and $P$, we have $P \sqsubset X^{(1)}$ iff $P \sqsubset X$ and for all $n \in \N$, there exist two distinct equal-sized extensions $Q_1, Q_2 \sqsubset X$ of $P$ that agree on the $(n \times n)$-square around $P$. This is clearly $\PI^0_{k+2}$. 
\end{proof}

The following construction shows that the bound given by Lemma~\ref{lem:IncByTwo} on the complexity of $k$th derivatives of $\PI^0_1$ subshifts is strict, and can be attained by a single countable SFT. In particular, it implies that Corollary~\ref{cor:1DSoficNiceness} fails miserably in higher dimensions, since two-dimensional sofic shifts are $\PI^0_1$, while their derivatives may be $\PI^0_3$-complete, and thus highly nonsofic. The rank of the subshift we build will be $\omega + k$ for some finite $k$, and with slight modifications, we could guarantee its $\omega$th derivative to be recursive (with this exact construction, it is probably already recursive, but we have not verified this). We start with a definition, and a classical computability lemma.

\begin{definition}
For $k \in \N$, denote by $\Phi_k$ the set of first-order arithmetical formulas with $k$ free variables and only bounded quantifiers. For $k,l \in \N$, denote by $\phi^k_l$ the $l$th formula in $\Phi_k$, ordered first by length and then lexicographically.
\end{definition}

\begin{lemma}[Lemma 2 in \cite{KrShWa60}]
\label{lemma:Infty}
Let $k \in \N$ and $\phi \in \Phi_{2k+1}$. Then there exists $\psi \in \Phi_{k+1}$, uniformly computable from $\phi$ and $k$, such that
\[ \forall n_1 : \exists n_2 : \cdots \forall n_{2k-1} : \exists n_{2k} : \forall n_{2k+1} : \phi(n_1, \ldots, n_{2k+1}) \]
is equivalent to
\[ \exists^\infty n_1 : \exists^\infty n_2 : \cdots \exists^\infty n_k : \forall n_{k+1} : \psi(n_1, \ldots, n_{k+1}). \]
\end{lemma}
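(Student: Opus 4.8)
The plan is to reduce the $2k+1$-alternation formula to the $k$-alternation-plus-$\exists^\infty$ form inductively, using the classical observation that an alternation of one universal and one existential quantifier at the innermost level can be collapsed into a single $\exists^\infty$ quantifier after replacing the predicate by a cumulative (monotone) version of it. Concretely, I would first treat the innermost block $\forall n_{2k} : \exists n_{2k+1} : \phi$. Wait---the quantifier pattern here ends in $\forall n_{2k+1}$, so the innermost pair to fuse is $\exists n_{2k} : \forall n_{2k+1} : \phi$. The key trick is that $\exists n_{2k} : \forall n_{2k+1} : \phi(\ldots, n_{2k}, n_{2k+1})$ is equivalent to $\exists^\infty m : \forall n_{2k+1} \leq m : \phi(\ldots, m, n_{2k+1})$ after bounding the last quantifier; the passage to $\exists^\infty$ works because if some fixed witness $n_{2k}$ makes $\forall n_{2k+1} : \phi$ true, then arbitrarily large $m$ also work once we make the predicate monotone in its fused variable.

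First I would make precise the single-step collapse. Given $\phi \in \Phi_{j+2}$ with quantifier prefix $\cdots : \exists n : \forall n' : \phi$, I define a new bounded formula $\phi'(\ldots, m) \equiv \forall n' \leq m : \exists n \leq m : \phi(\ldots, n, n')$, which is still in $\Phi$ (bounded quantifiers only, one fewer free variable), and argue that $\exists n : \forall n' : \phi$ is equivalent to $\exists^\infty m : \phi'(\ldots, m)$. The forward direction uses that a true witness $n$ together with any upper bound makes cofinally many $m$ satisfy $\phi'$; the reverse direction extracts from infinitely many satisfying $m$ a single $n$ that works for all $n'$, using that $\exists^\infty m$ forces a fixed small witness $n$ to recur (a pigeonhole / König-style argument, since the witnessing $n$ for each $n'$ is bounded). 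Then I would iterate: each application of this lemma converts the outermost-of-the-innermost $\exists\forall$ pair into a single $\exists^\infty$, reducing $2k+1$ quantifiers to $2k-1$ quantifiers while prepending one $\exists^\infty$. After $k$ applications the prefix $\forall n_1 : \exists n_2 : \cdots : \forall n_{2k-1} : \exists n_{2k} : \forall n_{2k+1}$ becomes $\exists^\infty n_1 : \cdots : \exists^\infty n_k : \forall n_{k+1} : \psi$, which is exactly the claimed form.

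The bookkeeping I must watch is the interaction between the fusion step and the remaining outer quantifiers. The cleanest route is an induction on $k$: at each stage the remaining outer block $\forall : \exists$ of two quantifiers gets absorbed into one $\exists^\infty$ by treating the two-variable pair as a single bounded search. To keep the predicate bounded I would, as is standard, replace $\psi$ at each stage by its bounded-search reformulation and verify the quantifier count and the $\exists^\infty$ prefix grow as stated; uniform computability of $\psi$ from $\phi$ and $k$ follows because every step is an explicit syntactic transformation. The main obstacle---and the part deserving the most care---is establishing the reverse implication of the single-step collapse, namely that $\exists^\infty m$ satisfying the monotonized bounded formula yields an honest $\exists\forall$ witness; this is where one genuinely uses that bounded quantifiers let a recurring small witness be pinned down, and where the nonstandard quantifier $\exists^\infty$ does real work rather than being cosmetic.
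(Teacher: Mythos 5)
Your construction fails at its very first step, and in a way that no amount of bookkeeping can repair. You propose to fuse the innermost pair $\exists n_{2k} : \forall n_{2k+1}$ into a single $\exists^\infty$, via the claimed equivalence of $\exists n : \forall n' : \phi(\ldots,n,n')$ with $\exists^\infty m : \forall n' \leq m : \exists n \leq m : \phi(\ldots,n,n')$. This equivalence is false. Take $\phi(n,n') \equiv (n' \leq n)$: the left-hand side asserts the existence of a largest natural number and is false, while for every $m$ and every $n' \leq m$ one may choose $n = n'$, so the right-hand side holds for all $m$ and the $\exists^\infty$ statement is true. Note that this $\phi$ is even monotone in the fused variable $n$, so your proposed monotonization does not save the step; the reverse implication that you yourself flag as the delicate one is exactly where it breaks, because once the witness $n$ may depend on $n'$ (which is what moving $\exists n$ inside $\forall n' \leq m$ does), no pigeonhole argument recovers a uniform witness --- the bound on $n$ is $m$ itself, and $m$ grows. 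Moreover, the failure is unavoidable in principle. The equivalence would have to hold with the outer variables $n_1, \ldots, n_{2k-1}$ as free parameters (your induction needs this), but for bounded $\theta$ the statement $\exists^\infty m : \theta(\vec{p},m)$ is $\forall t : \exists m > t : \theta(\vec{p},m)$, hence defines a $\PI^0_2$ relation, while $\exists n : \forall n' : \phi(\vec{p},n,n')$ ranges over all $\SIGMA^0_2$ relations; since $\SIGMA^0_2 \not\subseteq \PI^0_2$, no such uniform transformation exists. The quantifier $\exists^\infty$ over a bounded matrix captures exactly $\PI^0_2$, so the prefix must be parsed the other way: into $k$ blocks $\forall n_{2i-1} \exists n_{2i}$ (each a $\PI^0_2$ prefix, each becoming one $\exists^\infty$), with the trailing $\forall n_{2k+1}$ surviving as the final $\forall n_{k+1}$. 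Your parse into a leading $\forall$ followed by $\exists\forall$ blocks has the wrong shape, which is also visible in the statement itself: the target form ends with a genuine $\forall$ and does not begin with one.

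Even setting the false step aside, the iteration would not produce the stated normal form: fusing innermost pairs generates $\exists^\infty$ quantifiers \emph{inside} the remaining alternation, whereas the target has all $k$ of them outermost over a single $\forall$ with bounded matrix, and a fusion lemma of the kind you want applies only when its matrix is bounded. For comparison, a correct argument (the paper gives none; it invokes Lemma 2 of \cite{KrShWa60}) runs, for $k=1$, roughly as follows: $\forall n_1 \exists n_2 \forall n_3 : \phi$ is equivalent to $\exists^\infty m : \forall m' : \psi(m,m')$, where $\psi$ demands that $m$ code a sequence $\langle w_0, \ldots, w_j \rangle$ together with a bound $t$, that each $w < w_i$ have a counterexample $n_3 \leq t$ witnessing $\neg\phi(i,w,n_3)$ (so $w_i$ can only be the \emph{least} witness for $i$), that $t$ be least with this property, and that $\phi(i,w_i,n_3)$ hold for all $i \leq j$ and $n_3 \leq m'$. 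The minimality clauses make the valid code of each length $j$ unique, so infinitely many valid codes force arbitrarily long witness sequences; this canonicity device, which absorbs the $\SIGMA^0_1$ non-minimality conditions into the bounded data carried by the $\exists^\infty$ variable, is precisely what your pigeonhole appeal is missing, and without it the reverse direction is irreparable.
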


We denote $\psi = I(\phi)$ in the above lemma. With this result, we can transform alternating quantifiers into infinitary ones, and the application to derivatives is rather straightforward.

\begin{theorem}
\label{thm:DerivativesCanCompute}
There exists a countable two-dimensional SFT $X$ for which the problem whether $P \sqsubset X^{(k)}$ for a given pattern $P$ is $\PI^0_{2k+1}$-complete, for all $k \in \N$.
\end{theorem}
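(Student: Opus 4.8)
For the upper bound, observe that $X$ is an SFT and hence $\PI^0_1$; applying Lemma~\ref{lem:IncByTwo} inductively gives that $X^{(k)}$ is $\PI^0_{1+2k} = \PI^0_{2k+1}$, so the problem ``$P \sqsubset X^{(k)}$'' lies in $\PI^0_{2k+1}$ for every fixed $k$, whatever countable SFT $X$ we build. The real content is hardness. I would fix, for each $k$, a $\PI^0_{2k+1}$-complete set presented by a formula $\phi \in \Phi_{2k+1}$, so that the instance asks whether
\[ \forall n_1 : \exists n_2 : \cdots : \forall n_{2k+1} : \phi(n_1, \ldots, n_{2k+1}) \]
holds. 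By Lemma~\ref{lemma:Infty} this is equivalent to $\exists^\infty n_1 : \cdots : \exists^\infty n_k : \forall n_{k+1} : \psi(n_1, \ldots, n_{k+1})$ with $\psi = I(\phi) \in \Phi_{k+1}$ uniformly computable from $\phi$. The guiding idea is to produce a single SFT $X$ in which one derivative strips off one $\exists^\infty$ quantifier, so that the $k$-fold derivative realizes the whole infinitary prefix and the base SFT level realizes the final $\PI^0_1$ block $\forall n_{k+1} : \psi$.

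The construction generalizes the nested structure of Example~\ref{ex:Diamonds} to $k$ nested levels. A configuration carries a finite \emph{header} encoding $\psi$ together with $k$ nested regions (the nested diamonds of Example~\ref{ex:Diamonds}), the $i$th of which is either \emph{committed} to a finite size $n_i \in \N$ or left \emph{uncommitted}, its boundary pushed to infinity as an infinite diamond. Inside the innermost region a counter machine runs, reading the committed values $n_1, \ldots, n_k$ as counter contents and sweeping the auxiliary value $n_{k+1}$ along an infinite ray; whenever it detects $\neg \psi(n_1, \ldots, n_{k+1})$ it enters a rejecting state, which is forbidden. Since $\psi$ has only bounded quantifiers this check is effective and the forbidden set is finite, so $X$ is genuinely an SFT, and a fully committed configuration $c_k$ lies in $X$ iff $\forall n_{k+1} : \psi(n_1, \ldots, n_{k+1})$. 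The geometry is arranged exactly as in Example~\ref{ex:Diamonds} so that, for each $j$, the family of configurations $c_{j+1}$ committing level $j+1$ to size $n_{j+1}$ converges, as $n_{j+1} \to \infty$, to the configuration $c_j$ that leaves level $j+1$ uncommitted, and no other points of $X$ accumulate at $c_j$.

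I would then prove, by induction on $k - j$ downward from $j = k$, that the configuration $c_j$ (levels $1, \ldots, j$ committed to $n_1, \ldots, n_j$) lies in $X^{(k-j)}$ iff $\exists^\infty n_{j+1} : \cdots : \exists^\infty n_k : \forall n_{k+1} : \psi(n_1, \ldots, n_{k+1})$; the base case $j = k$ is the computation just described. For the inductive step, since $X^{(k-j)} = (X^{(k-j-1)})'$, the point $c_j$ lies in $X^{(k-j)}$ iff it is nonisolated in $X^{(k-j-1)}$, i.e. a limit of other points of $X^{(k-j-1)}$; by the convergence and cleanness of the geometry these limiting points are exactly the $c_{j+1}$, so $c_j \in X^{(k-j)}$ iff infinitely many $n_{j+1}$ give $c_{j+1} \in X^{(k-j-1)}$, which by the induction hypothesis is exactly $\exists^\infty n_{j+1} : \cdots : \forall n_{k+1} : \psi$. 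Taking $j = 0$ and invoking Lemma~\ref{lemma:Infty}, the uncommitted configuration $c_0$ lies in $X^{(k)}$ iff the original $\PI^0_{2k+1}$ instance holds. Letting $P$ be the header pattern encoding $\psi$, which occurs precisely in the configurations $c_0, c_1, \ldots$, of which $c_0$ survives the most derivatives, we obtain $P \sqsubset X^{(k)}$ iff $c_0 \in X^{(k)}$; since $\phi \mapsto P$ is computable by Lemma~\ref{lemma:Infty}, this is the desired reduction and gives $\PI^0_{2k+1}$-hardness.

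The main obstacle is the simultaneous control of geometry and derivatives: the local rules must make committing a level cost \emph{exactly} one derivative, establishing both that the uncommitted configuration genuinely accumulates the committed ones (the surviving-at-least direction) and that a committed configuration admits no further top-level accumulation that would let it survive longer (the surviving-at-most direction), all while keeping $X$ countable and routing the encoded values $n_1, \ldots, n_k$ to the counter machine through the nested regions. Verifying the cleanness of the accumulation, namely that the only points limiting to $c_j$ are the expected $c_{j+1}$, in the presence of the embedded computation is the crux and is where the bulk of the combinatorial work in the full proof will lie.
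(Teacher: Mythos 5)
Your strategy coincides with the paper's proof in all essential ingredients: the upper bound by iterating Lemma~\ref{lem:IncByTwo}, the use of Lemma~\ref{lemma:Infty} to trade the alternating prefix for a block of $\exists^\infty$ quantifiers followed by a $\PI^0_1$ kernel, a counter machine embedded in each configuration that verifies $\forall n_{k+1} : \psi(n_1,\ldots,n_{k+1})$ (so that invalid full commitments simply fail to tile), and a downward induction showing that each derivative peels off exactly one $\exists^\infty$ quantifier, the reduction being ``formula $\mapsto$ header pattern.'' The only real difference is how a guess $n_i$ is ``committed'': you do it geometrically, by the size of a nested diamond as in Example~\ref{ex:Diamonds}, while the paper does it temporally, by having the machine choose $n_i$ via a loop inside a cone swept by a bouncing ball, so that larger guesses take longer and the never-deciding configuration is their limit. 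These realizations are interchangeable; your configurations $c_0, c_1, \ldots, c_k$ and the observation that $c_0$ survives the most derivatives correspond exactly to the paper's points with $0, 1, \ldots, k$ guesses made and its input pattern $01^l2^k0$.

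There is, however, one genuine gap as written: the theorem asks for a \emph{single} SFT $X$ whose $k$th derivative is $\PI^0_{2k+1}$-complete \emph{for every} $k \in \N$, whereas you fix $k$ at the outset and build an SFT with exactly $k$ nested levels. That yields a family $(X_k)_{k\in\N}$, which is strictly weaker: an infinite disjoint union of SFTs is not an SFT (the alphabet and rule set must remain finite), and a fixed-$k$ member of your family cannot be $\PI^0_{2j+1}$-hard at levels $j > k$, since after $k+1$ derivatives only degenerate configurations remain and only $k$ infinitary quantifiers were ever available. The repair is exactly what the paper's input segment $1^l2^k$ accomplishes: the header must encode both the formula and its arity, the tiling rules must force the number of nested levels (equivalently, the number of guesses the machine makes) to match the declared arity, and then configurations of all arities coexist in one SFT and your reduction $\phi \mapsto P$ becomes uniform in $k$. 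Two smaller points: for hardness the instances must be the formulas themselves (the truth set of sentences $\forall n_1 \exists n_2 \cdots \forall n_{2k+1} : \phi$ with $\phi \in \Phi_{2k+1}$ is the $\PI^0_{2k+1}$-complete set being reduced, which is indeed how your final sentence reads), and the ``cleanness'' you defer needs a slightly stronger induction hypothesis than the one you state --- namely that \emph{every} point with levels $1,\ldots,i$ committed dies by the $(k-i+1)$st derivative --- so that points with deeper commitments cannot spuriously keep $c_j$ alive; with that strengthening, the accumulation argument closes just as in the paper.
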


\begin{proof}
Consider the closure of the subset of $\{0,1,2\}^\N$ consisting of points of the form
\[ 1^l 2^k 0^{n_1} 1 0^{n_2} \cdots 0^{n_k} 1 0^\infty, \]
where $I(\phi_l^{2k+1})(n_1, n_2, \ldots, n_k, n_{k+1})$ is true for all $n_{k+1}$. This set is $\PI^0_1$-complete. Clearly, the derivative of this closed set contains only those points of the form
\[ 1^l 2^k 0^{n_1} 1 0^{n_2} \cdots 0^{n_{k-1}} 1 0^\infty, \]
where $I(\phi_l^{2k+1})(n_1, n_2, \ldots, n_{k-1}, n_k, n_{k+1})$ holds for infinitely many $n_k$ and all $n_{k+1}$. Thus the derivative is $\PI^0_3$-complete, and we could verify by induction that the $n$th derivative is $\PI^0_{2n+1}$-complete. The construction of $X$ is an implementation of the same idea by a two-dimensional SFT.

A typical configuration $x \in X$ consists of the \emph{input}, a segment of the form $1^l2^k$ extending to the right from the origin, and the \emph{computation area}, a filled cone extending upwards from the input. The input also extends upwards in order to be accessible in the computation area. The rest of $x$ is filled with $0$'s. See Figure~\ref{fig:Typicalx} for a visualization.

\begin{figure}[ht]

\begin{center}
\begin{tikzpicture}[scale=0.15]           

\draw[->] (-3,1.5) -- (-1,1.5);
\draw[->] (-3,9.5) -- (-1,9.5);
\draw[->] (-3,25.5) -- (-1,25.5);

\foreach \y in {1,...,30}{
	\filldraw[fill=white] (0,\y) rectangle +(1,1);
}
\foreach \y in {2,...,30}{
	\filldraw[fill=red!70!white] (1,\y) rectangle +(1,1);
	\foreach \x in {2,...,6}{
		\filldraw[fill=black!30] (\x,\y) rectangle +(1,1);
	}
	\foreach \x in {7,...,9}{
		\filldraw[fill=black!50] (\x,\y) rectangle +(1,1);
	}
	\foreach \x in {10,...,39}{
		\filldraw[fill=white] (\x,\y) rectangle +(1,1);
	}
	\foreach \x in {2,...,\y}{
		\filldraw[fill=green!50!white] (\x+8,\y) rectangle +(1,1);
	}
	\filldraw[fill=red!70!white] (\y+8,\y) rectangle +(1,1);
}
\foreach \x in {2,...,6}{
	\filldraw[fill=black!30] (\x,1) rectangle +(1,1);
}
\foreach \x in {7,...,9}{
	\filldraw[fill=black!50] (\x,1) rectangle +(1,1);
}
\foreach \x in {10,...,39}{
	\filldraw[fill=white] (\x,1) rectangle +(1,1);
}

\foreach \x in {0,...,39}{
	\filldraw[fill=white] (\x,0) rectangle +(1,1);
}

\foreach \y in {1,...,6}{
	\filldraw[fill=blue,opacity=0.6] (\y*2+1,\y+1) rectangle +(1,1);
}
\foreach \x in {1,...,14}{
	\filldraw[fill=blue,opacity=0.6] (\x+1,8) rectangle +(1,1);
}

\foreach \y in {1,...,14}{
	\filldraw[fill=blue,opacity=0.6] (\y*2+1,\y+9) rectangle +(1,1);
}
\foreach \x in {1,...,30}{
	\filldraw[fill=blue,opacity=0.6] (\x+1,24) rectangle +(1,1);
}

\foreach \y in {1,...,5}{
	\filldraw[fill=blue,opacity=0.6] (\y*2+1,\y+25) rectangle +(1,1);
}

\end{tikzpicture}
\end{center}

\caption{A typical point of $X$. The blue lines represent the bouncing ball. Counters and states, whose values are updated at the lines indicated by an arrow, are not shown here.}
\label{fig:Typicalx}
\end{figure}
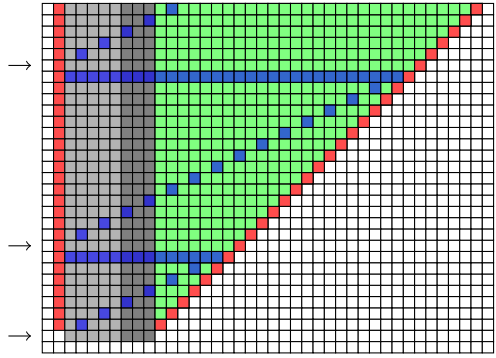

Inside the computation area, a \emph{ball} bounces between the walls of the cone, and in every sweep, one step of a counter machine $M$ is simulated. This is simply to ensure that $X$ is countable, as every nontrivial computation now has a starting point. The machine operates inside the cone, where the values of its counters are stored as the distances of special \emph{counter symbols} from the vertical line going through the origin. A counter $c$ with value $n$ is represented by a length-$n$ sequence of a symbol $\langle c \rangle$ extending right from the central column. Configurations of $X$ thus correspond to computation histories of $M$ in a concrete way.

Given the input $(l,k)$, the machine $M$ sequentially guesses $k$ natural numbers $n_1, \ldots, n_k$ and then checks in an infinite loop that $I(\phi^{2k+1}_l)(n_1,\ldots,n_k,n_{k+1})$ holds for all $n_{k+1} \in \N$. If the check fails at some point, a tiling error is produced. The guesses are also made using loops, so that for every $i \in [1,k]$, $M$ starts enumerating all $n \in \N$, and at some point decides that $n_i$ gets the value $n$. Thus larger guesses for the numbers take more time to compute, and since the computation is visible in the subshift, the input pattern $01^l2^k0$ occurs in $X^{(k)}$ if and only if
\[ \exists^\infty n_1 : \cdots \exists^\infty n_k : \forall n_{k+1} : I(\phi^{2k+1}_l)(n_1, \ldots, n_{k+1}) \]
is true. But by Lemma~\ref{lemma:Infty}, this is equivalent to
\[ \forall n_1 : \exists n_2 : \cdots \forall n_{2k-1} : \exists n_{2k} : \forall n_{2k+1} : \phi^{2k+1}_l(n_1, \ldots, n_{2k+1}), \]
and thus the subshift $X^{(k)}$ is $\PI^0_{2k+1}$-hard in the sense of the claim. Since it reaches the upper bound given by Lemma~\ref{lem:IncByTwo}, it is actually $\PI^0_{2k+1}$-complete.

The only thing left to prove is the countability of $X$. For each input pattern $01^l2^k0$, there are only a countable number of ways to complete it into a configuration, since the computation structure is forced, and only $k$ nondeterministic moves are made by $M$. Configurations which do not contain input, or which contain an infinite input, are degenerate, and a simple case analysis shows that they, too, form a countable set. 
\end{proof}

In \cite{JeVa11}, a similar encoding of `computation in a cone' is used, but instead of counter machines, the authors embed computation histories of Turing machines into configurations of countable SFTs. While both approaches have their merits, we feel that it is slightly more obvious how the counter machine construction works and why the resulting subshift is countable.

\section{Subpattern Order}
\label{sec:SubpatternOrder}

We now focus on the subpattern posets of countable multidimensional subshifts.

\begin{proposition}
\label{prop:NoUp}
A countable $d$-dimensional subshift does not contain an infinite upward chain with relation to $\succ$.
\end{proposition}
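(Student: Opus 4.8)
The plan is to compare the subpattern order with the Cantor--Bendixon rank of individual points. First recall that $x \succcurlyeq y$ holds exactly when every pattern occurring in $y$ also occurs in $x$. Since $X$ is countable, Lemma~\ref{lem:CountableRanked} gives $X^{(\lambda)} = \emptyset$ for some $\lambda$, so each $x \in X$ lies in $X^{(\alpha)} \setminus X^{(\alpha+1)}$ for a unique ordinal, which I will call the \emph{rank} $r(x)$ of the point $x$. The whole proposition then reduces to the following monotonicity claim, which I would isolate as the key lemma: if $x \succ y$, then $r(x) < r(y)$. Granting this, an infinite upward chain $x_0 \prec x_1 \prec x_2 \prec \cdots$ (so $x_{i+1} \succ x_i$) would force the strictly decreasing sequence of ordinals $r(x_0) > r(x_1) > r(x_2) > \cdots$, which is impossible by well-foundedness of the ordinals; hence no such chain exists.

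To prove the key lemma I would in fact show the stronger statement that, for every ordinal $\beta$, $x \in X^{(\beta)}$ implies $y \in X^{(\beta+1)}$; taking $\beta = r(x)$ then yields $r(y) \geq r(x)+1 > r(x)$. The construction is geometric and requires no transfinite induction. Fix $\beta$ with $x \in X^{(\beta)}$. For each $n$, the pattern $y$ restricted to the box $[-n,n]^d$ occurs in $y$, hence in $x$ since $x \succcurlyeq y$, so some translate $z_n$ of $x$ agrees with $y$ on $[-n,n]^d$. As $n \to \infty$ these translates agree with $y$ on larger and larger boxes, so $z_n \to y$ in the product topology. Because $X^{(\beta)}$ is a subshift it is shift-invariant and closed, so every $z_n$ lies in $X^{(\beta)}$ and the limit $y$ lies in $X^{(\beta)}$ as well.

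The one place strictness is used — and the step I expect to be the crux — is verifying that the approximating points $z_n$ are genuinely distinct from $y$, so that $y$ is really a nonisolated point of $X^{(\beta)}$ rather than the constant limit of itself. This is immediate here: each $z_n$ is a translate of $x$ and therefore carries exactly the patterns of $x$, and since $x \succ y$ there is a pattern occurring in $x$ but not in $y$; that pattern occurs in $z_n$ but not in $y$, so $z_n \neq y$. Thus $y \in \overline{X^{(\beta)} \setminus \{y\}}$, meaning $y$ is nonisolated in $X^{(\beta)}$, i.e.\ $y \in (X^{(\beta)})' = X^{(\beta+1)}$, as required. This completes the lemma and hence the proposition. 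I would remark that the argument also clarifies the asymmetry noted later in the paper: a descending chain $x_0 \succ x_1 \succ \cdots$ produces \emph{increasing} ranks, which ordinals permit, so it is precisely the upward direction that is obstructed.
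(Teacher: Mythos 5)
Your proof is correct, and it takes a genuinely different route from the paper's. The paper argues combinatorially and contrapositively: from an infinite ascending chain it builds a binary tree of nested patterns (using the strictness of each $x^{i-1} \prec x^i$ to find two distinct extensions at every level) whose limits yield an explicit continuous injection of $\{0,1\}^\N$ into $X$, so any subshift containing such a chain is uncountable; derivatives are never invoked. You instead tie the subpattern order to the Cantor--Bendixson rank of points: countability enters only through Lemma~\ref{lem:CountableRanked}, which makes $r(x)$ well-defined, and your monotonicity lemma (if $x \succ y$ and $x \in X^{(\beta)}$, then $y \in X^{(\beta+1)}$, hence $r(x) < r(y)$) turns an ascending chain into a strictly decreasing sequence of ordinals. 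I checked the delicate points and they all hold: $X^{(\beta)}$ is again a subshift (closed and shift-invariant, as the paper notes), so the translates $z_n$ of $x$ stay in $X^{(\beta)}$; and strictness of $x \succ y$ supplies a pattern occurring in every $z_n$ but not in $y$, which is exactly what makes $y$ a nonisolated point of $X^{(\beta)}$. As for what each approach buys: the paper's proof is self-contained and proves something quantitatively stronger --- any subshift, countable or not, with an infinite ascending chain contains a copy of Cantor space --- which is precisely the ``explicit continuous injection'' the paper advertises when contrasting its method with that of \cite{BaDuJe08}; your proof is shorter, links the paper's two main themes (derivatives and the subpattern order) in a clean structural statement, and explains the asymmetry the paper only observes, namely that a descending chain forces an \emph{increasing} sequence of ranks, which ordinals permit, so only the upward direction is obstructed. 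Your rank-based argument is also plausibly closer in spirit to the original proof of Theorem 3.7 in \cite{BaDuJe08}, from which this paper explicitly distinguishes its own ``more combinatorial'' method.
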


\begin{proof}
Suppose that a subshift $X \subset S^{\Z^d}$ contains a chain $(x^i)_{i \in \N}$ with $x^i \succ x^{i-1}$ properly for all $i \geq 1$. We show that $X$ is uncountable by constructing an injective map $f : \{0,1\}^\N \to X$. First, note that if some $x^i$ with $i \geq 1$ were periodic, then the relation $x^i \succ x^{i-1}$ could not be proper, and so all $x^i$ with $i \geq 1$ are aperiodic.

Define $k_0 = 0$, and consider the one-cell pattern $P(\epsilon) = x^0_0$, which must occur somewhere in $x^1$. Since $x^0 \prec x^1$ properly, $x^1$ is not in the orbit of $x^0$, and thus there exist two distinct patterns $P(0)$ and $P(1)$ of size $(2k_1+1)^d$ for some $k_1$ occurring in $x^1$ with $P(\epsilon)$ in the center. In general, for all $n \in \N$, there exists $k_n \in \N$ such that for all words $w \in \{0,1\}^{n-1}$, we have two distinct patterns $P(w0)$ and $P(w1)$ of size $(2k_n+1)^d$ occurring in $x^n$ and containing $P(w)$ in their centers.

For all $w \in \{0,1\}^\N$, we define $f(w) = \lim_{n \longrightarrow \infty} P(w_{[0,n-1]})$. Then $f$ is a well-defined injection from $\{0,1\}^\N$ to $X$, and the claim is proved. 
\end{proof}

This is a generalization of \cite[Theorem 3.7]{BaDuJe08}, which states the result for two-dimensional SFTs. While also their method directly generalizes to all countable subshifts, ours is more combinatorial in nature, and gives an explicit (although not necessarily effective) continuous injection from $\{0,1\}^\N$ to $X$. For antichains we have the following example.

\begin{example}
\label{ex:YesSide}
There exists a countable two-dimensional SFT with an infinite number of periodic points and an infinite antichain in its subpattern poset: take the SFT where horizontal and vertical lines form an infinite grid, and every rectangle is forced to be a square using a diagonal signal.
\end{example}

While no countable subshift contains an infinite ascending chain, and a simple countable SFT with an infinite antichain exists, the problem of descending chains turns out to be much more involved. We repeat the following, yet unsolved, conjecture from \cite{BaDuJe08}.

\begin{conjecture}
\label{conj:NoDown}
There is no countable two-dimensional SFT with an infinite downward chain for $\succ$.
\end{conjecture}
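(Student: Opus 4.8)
The plan is to recast the conjecture in terms of orbit closures and then attempt a transfinite induction on the Cantor--Bendixon rank. First I would record the basic translation: since the orbit closure $\overline{O(x)}$ of a configuration $x \in X$ (where $O(x)$ is the $\sigma$-orbit of $x$) is exactly the set of configurations all of whose patterns occur in $x$, we have $x \succcurlyeq y$ if and only if $\overline{O(y)} \subseteq \overline{O(x)}$, and $x \succ y$ if and only if this inclusion is strict. Thus an infinite downward chain $x^0 \succ x^1 \succ \cdots$ is the same datum as a strictly decreasing sequence of transitive (singly generated) subsystems $\overline{O(x^0)} \supsetneq \overline{O(x^1)} \supsetneq \cdots$ of $X$. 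By compactness their intersection is a nonempty countable subshift, which by Lemma~\ref{lem:CountableRanked} and the existence of minimal subsystems contains a periodic orbit, so the chain descends toward a periodic object from above.

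Next I would try to induct on rank. The derivative strictly lowers complexity, and by Lemma~\ref{lem:CountableRanked} every countable SFT has an ordinal rank, so I would assume the statement for all countable two-dimensional SFTs of smaller rank and pass to $X^{(1)}$, in which no infinite downward chain exists by the inductive hypothesis. A downward chain in $X$ must therefore interact essentially with the isolated points $X \setminus X^{(1)}$, each of which carries a signature pattern $P \in \B(X)$ singling it out. Since the ranks $\mathrm{rk}(\overline{O(x^i)})$ form a nonincreasing sequence of ordinals, they stabilize on a tail to some $\mu$, and on this tail all the $\overline{O(x^i)}$ share a last nonempty derivative, which by the structure theory of countable two-dimensional SFTs is a finite union of periodic orbits. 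The goal would then be to show that the finitely many periodic ``cores'' available, together with the finitely describable defect structure forced by the finite set of forbidden patterns, leave room for only finitely many strictly nested transitive subsystems.

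To make such a counting argument bite, I would look for an invariant finer than rank that is forced to decrease strictly along the chain --- for instance a combination of the rank with the number of independent directions of aperiodicity of the generating point, or with the set of periods of the periodic configurations appearing as orbit limits. Note that the branching strategy of Proposition~\ref{prop:NoUp} is \emph{not} available here: losing patterns on the way down does not produce the binary tree of incomparable extensions that gaining patterns on the way up does, so a direct cardinality contradiction with countability seems out of reach, and the argument must instead be structural.

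The hard part, and the reason this remains only a conjecture, is precisely that the Cantor--Bendixon rank is \emph{not} strictly monotone under strict inclusion of orbit closures: one can shrink a transitive subsystem without lowering its rank, so the naive ``descending sequence of ordinals'' contradiction fails and one is thrown back onto the constant-rank tail. Controlling that tail is delicate, and any successful invariant must be sensitive to exactly the feature separating the finite-type case from the sofic case --- after all, the companion construction of this section exhibits such an infinite descending chain inside a countable \emph{sofic} shift, so no argument that ignores the locality of the forbidden patterns can possibly succeed.
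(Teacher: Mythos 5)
You have been asked to prove Conjecture~\ref{conj:NoDown}, which the paper itself does not prove: it is explicitly stated as an unsolved conjecture (repeated from the literature), and the paper's contribution in this direction is the opposite one, a countable \emph{sofic} counterexample (Theorem~\ref{thm:SoficChain}). Your proposal does not prove it either, as you concede in your final paragraph, so what you have written is a survey of obstructions rather than a proof. To your credit, the preliminary reductions are correct: $x \succcurlyeq y$ iff $\overline{O(y)} \subseteq \overline{O(x)}$ (where $O(x)$ denotes the orbit of $x$), the Cantor--Bendixon rank is monotone under inclusion so the ranks along a descending chain stabilize, the last nonempty derivative of each orbit closure is finite and consists of periodic points, and your observation that the tree-branching argument of Proposition~\ref{prop:NoUp} has no dual for descending chains is exactly right. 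Your closing diagnosis --- that any successful argument must exploit the finite-type hypothesis, since the sofic counterexample rules out anything that works for all countable subshifts --- is also the correct way to frame why the problem is hard.

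There is, however, a concrete structural flaw in the inductive scheme you propose, beyond the admitted missing invariant. You want to do transfinite induction on rank and apply the inductive hypothesis to $X^{(1)}$. But the class of countable two-dimensional SFTs is not closed under the derivative: by the paper's own Theorem~\ref{thm:DerivativesCanCompute}, the derivative of a countable SFT can be $\PI^0_3$-complete, hence neither an SFT nor even sofic (sofic shifts are $\PI^0_1$). So an inductive hypothesis quantified over countable SFTs of smaller rank simply does not apply to $X^{(1)}$. If you instead widen the hypothesis to all countable subshifts of smaller rank so that the induction can pass through the derivative, then the statement you are inducting on becomes false, again by Theorem~\ref{thm:SoficChain}. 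Either way the scaffolding collapses before you ever reach the ``constant-rank tail'' that you correctly identify as the core difficulty. The conjecture remains open, and no choice of auxiliary invariant can rescue an induction whose ambient class is not preserved by the operation being iterated.
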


We will not prove this conjecture, but provide a counterexample in the sofic case. For this result, we use a lemma for simplicity's sake, although the subshift we construct with it also has a direct implementation using signals.

\begin{lemma}[\cite{DuRoSh10}]
\label{lemma:CoREAreSofic}
Let $Y \subset S^\Z$ be a one-dimensional $\PI^0_1$ subshift. Then $X = \{ x \in S^{\Z^2} \;|\; \exists y \in Y: \forall i,j \in \Z: x_{(i,j)} = y_j \}$ is a two-dimensional sofic shift.
\end{lemma}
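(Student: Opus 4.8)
The plan is to realize $X$ as a one-block factor of a two-dimensional SFT, since by definition the image of an SFT under a block map is sofic. Because $Y$ is $\PI^0_1$, there is a Turing machine $M$ that enumerates exactly the forbidden words of $Y$; equivalently, $M$ halts on input $w \in S^*$ if and only if $w \notin \B(Y)$. I would build an SFT $Z$ over an alphabet $S \times A$, where the $A$-component carries auxiliary computational data, and take $\pi : Z \to X$ to be the one-block map projecting onto the $S$-coordinate. The goal is to design the local rules of $Z$ so that $\pi(Z) = X$: a configuration survives in $Z$ precisely when its $S$-layer is horizontally constant and the resulting vertical sequence lies in $Y$.

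First I would impose, by nearest-neighbor rules, that the $S$-layer is constant along rows, i.e.\ $x_{(i,j)} = x_{(i+1,j)}$ for all $i,j$. This is a finite-type condition forcing the $S$-layer to read off a single vertical sequence $y \in S^\Z$ with $x_{(i,j)} = y_j$. Reading the column of $S$-symbols at any fixed horizontal position recovers all of $y$, so the entire input is locally available to any computation embedded in the plane.

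Next, the auxiliary layer must run, at unbounded scales, a simulation of $M$ that tests each finite subword $y_{[j_1,j_2]}$ of the vertical sequence for forbiddenness, producing a tiling error whenever $M$ halts. The technical heart of the argument --- and the main obstacle --- is supplying correctly initialized, arbitrarily large computation zones inside a plain SFT. I would use a self-similar (fixed-point) tile set in the style of \cite{DuRoSh10}: a hierarchical tiling that lays out, at every scale, nested computation regions with blank-tape initialization, into which a universal machine simulating $M$ is embedded and given access to the $S$-symbols of its own rows. Each region runs $M$ on all subwords lying within its vertical span, for a number of steps growing with the region's size, and signals an error on halting.

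Granting this scaffolding, correctness follows from two complementary checks. For soundness, if $y \notin Y$ then some finite subword $w$ of $y$ is forbidden, so $M$ halts on $w$; any sufficiently large computation zone covering an occurrence of $w$ detects the halt and forces a tiling error, so no valid $z \in Z$ projects to such a $y$. For completeness, if $y \in Y$ then $M$ never halts on any subword, no error is triggered, and the hierarchical layer admits a consistent error-free tiling, yielding a valid $z$ with $\pi(z) = x$. Hence $\pi(Z) = X$, and $X$ is sofic. I expect the delicate point to be verifying both halves simultaneously: that every finite subword is eventually tested with unbounded resources (soundness), while an error-free filling still exists for every admissible $y$ (completeness).
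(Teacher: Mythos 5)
The paper does not prove this lemma at all: it is imported verbatim from \cite{DuRoSh10}, so there is no in-paper argument to compare yours against. (The paper even remarks that for the particular $Y$ it feeds into this lemma, a direct signal construction would suffice; the general statement for arbitrary $\PI^0_1$ subshifts is exactly the theorem of the cited reference.) What you have written is a faithful high-level reconstruction of that reference's proof: force the $S$-layer to be constant in the horizontal direction so that a configuration carries a single vertical sequence $y$, and use an auxiliary layer to run, at unboundedly many scales, a machine that halts exactly on words outside $\B(Y)$, producing a tiling error when a halt occurs. Your soundness/completeness split is the right pair of things to verify, and your starting observation is correct: since the paper classifies a subshift by its language, $\B(Y)$ is $\PI^0_1$, so its complement is recursively enumerable and the machine $M$ exists.

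The caveat is that essentially all of the mathematical content of the lemma sits inside the step you defer to ``a self-similar (fixed-point) tile set in the style of \cite{DuRoSh10}''. That hierarchical construction is not an off-the-shelf gadget around which this argument can be wrapped; it \emph{is} the theorem, so invoking it comes close to assuming what is to be proved. In particular, the two points you flag as delicate are exactly where the real work lies: (a) a level-$n$ macro-tile has a computation zone that is tiny compared with the stretch of letters it is nominally responsible for, so the testing of subwords must be distributed and delegated across levels, with overlapping responsibility zones, so that a word straddling macro-tile boundaries at every level is still eventually tested with unbounded resources; and (b) completeness requires both that the hierarchy together with the letter-delegation admits a consistent, error-free filling above every $y \in Y$, and that every configuration of the final tile set --- including degenerate limit configurations where macro-tile boundaries of all levels accumulate along a line --- still projects into $X$. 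As a blind attempt, your proposal is an accurate roadmap of the known proof with the correct reductions at both ends, but it is a citation-shaped sketch rather than a self-contained argument.
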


\begin{theorem}
\label{thm:SoficChain}
There exists a countable two-dimensional sofic shift with an infinite decreasing chain with relation to $\succ$.
\end{theorem}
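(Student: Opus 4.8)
The plan is to use Lemma~\ref{lemma:CoREAreSofic} to push the whole problem down to one dimension. First I would record a structural observation about the sofic shift $X$ that the lemma builds from a one-dimensional $\PI^0_1$ subshift $Y$: every $x\in X$ is the ``vertical lift'' $x_{(i,j)}=y_j$ of a unique $y\in Y$, it is constant in the first coordinate, and an $a\times b$ rectangular pattern of $x$ is determined exactly by the length-$b$ factor of $y$ read down a column. Consequently $x\succcurlyeq x'$ holds in $X$ iff every factor of the underlying $y'$ is a factor of $y$, that is, iff $y\succcurlyeq y'$ in $Y$, and $X$ is countable iff $Y$ is. So the subpattern poset of $X$ is isomorphic to that of $Y$, and it suffices to construct a countable one-dimensional $\PI^0_1$ subshift $Y\subset\{0,1\}^\Z$ carrying an infinite chain $y^0\succ y^1\succ\cdots$.

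For the chain I would encode natural numbers as \emph{gaps}, the maximal runs of $0$'s delimited by $1$'s, and let the configuration $y^i$ use only gaps of size $\ge i$. The size-$i$ gap $1 0^i 1$ is then the intended witness of strictness: it occurs in $y^i$ but in no $y^j$ with $j>i$, so once the containment $\B(y^{i+1})\subseteq\B(y^i)$ of factor sets is in hand we obtain $y^i\succ y^{i+1}$, and the chain is infinite because $\{\,g\ge i+1\,\}\subsetneq\{\,g\ge i\,\}$. Each $y^i$ would be assembled from the blocks $1 0^g 1$ ($g\ge i$) strung along the line and separated by long runs of $0$'s. The reason to let these separating spacers grow without bound is countability: in any limit of shifts of $y^i$ neighbouring blocks drift apart, so every limit point carries at most a single isolated block or is $0^\infty$, whence the orbit closure of $y^i$ is $\{\sigma^n y^i\}\cup\{\,{}^\infty 0\,1 0^g 1\,0^\infty : g\ge i\,\}\cup\{0^\infty\}$, a countable set.

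The delicate part is arranging the blocks so that $\B(y^{i+1})\subseteq\B(y^i)$. Single-block factors are easy: by making every admissible gap $g$ recur with unboundedly large flanking spacers, each word $0^a 1 0^g 1 0^b$ with $g\ge i$ occurs in $y^i$, and this automatically reproduces the boundary factors of $y^{i+1}$, whose leftmost block is flanked on one side by the half-line of $0$'s. The factors that genuinely need control are those straddling spacers, which record \emph{adjacencies} of gaps together with the intervening spacer length; here I would prescribe once and for all which tuples of gaps may sit at a given mutual distance (for instance, all tuples of gaps $\ge i$ whose total length is bounded by that distance), so that the realized multi-block factors of $y^i$ are exactly the \emph{$i$-admissible} ones, and these nest verbatim as $i$ increases because the defining constraint is monotone in $i$. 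Since only finitely many tuples are admissible at each fixed distance, this is consistent with the growing-spacer requirement and hence with countability.

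The main obstacle is precisely this last reconciliation. Nesting of the adjacency factors pulls toward letting gap-tuples occur as freely as possible, whereas countability of the orbit closures pulls toward letting each bounded-distance configuration occur only finitely often; a naive uniform ``increasing staircase'' fails exactly because the smallest gap then acquires a distinguished, level-dependent neighbour that is not reproduced one level down. The growing-spacer-with-recurrence mechanism is what should dissolve this tension, pushing all richness out to ever-larger spacers where it contributes no new bounded factor, and the step demanding the most care is verifying that no \emph{unintended} factor, one that is $(i+1)$- but not $i$-admissible, is smuggled in by some overlap or limit. The forbidden language of $Y$ produced this way is co-recursively enumerable but not regular, so $Y$ is $\PI^0_1$ and, by Lemma~\ref{lemma:CoREAreSofic}, lifts to a countable sofic $X$; the loss of regularity is unavoidable, which is exactly why the construction says nothing about Conjecture~\ref{conj:NoDown} for SFTs, where only finitely many patterns may be forbidden.
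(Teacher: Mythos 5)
Your opening reduction is correct and is a genuinely nice observation: for the vertical lift $X$ of a one\--dimensional $\PI^0_1$ subshift $Y$ given by Lemma~\ref{lemma:CoREAreSofic}, patterns of a lift are exactly (padded) vertical factors of the underlying point, so the subpattern poset of $X$ is isomorphic to that of $Y$ and countability transfers; the paper, by contrast, uses the lemma only to create a skeleton of horizontal lines and then works genuinely in two dimensions. So the theorem would indeed follow from the purely one\--dimensional statement you aim at. The problem is that your one\--dimensional construction, as described, is internally inconsistent. Note that $y^{i+1} \preccurlyeq y^i$ holds if and only if every central pattern of $y^{i+1}$ occurs in $y^i$, i.e.\ if and only if $y^{i+1}$ is a limit of shifts of $y^i$; since the relation is strict, $y^{i+1}$ is a limit point of the orbit of $y^i$ that is not itself a shift of $y^i$. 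But your countability argument asserts that every such limit point ``carries at most a single isolated block or is $0^\infty$,'' whereas $y^{i+1}$ by construction contains infinitely many blocks ($10^g1$ for all $g \geq i+1$). These two claims cannot both hold: the very mechanism you invoke for countability (neighbouring blocks drift apart in every limit) excludes the existence of the chain inside the subshift. Any correct construction must have orbit closures containing limit points with infinitely many blocks, and the countability argument must survive that --- this is exactly where the paper spends its effort, via the rigid forced structure of its signals.

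The admissibility device does not dissolve this tension under either reading of ``total length bounded by that distance.'' If every spacer of a tuple must dominate the total gap length, then no single configuration can realize exactly the admissible factors: the tight pair $10^i10^{2i}10^i1$ is admissible and must occur in $y^i$, but $y^i$ contains further blocks at some finite distance from that occurrence, and any such extension is inadmissible, so unintended factors are unavoidable at every level and nothing makes them nest. If instead only adjacent gaps must be dominated by their spacer (a rule that is hereditary and survives concatenation, so ``exactly the admissible factors'' becomes achievable), then the factor set of $y^i$ is a local, freely recombinable condition, and the orbit closure $\{z : \B(z) \subseteq \B(y^i)\}$ contains every infinite sequence of gap\--$i$ blocks whose consecutive distances are chosen freely from $\{2i, 2i+1\}$ --- an uncountable family. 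Freeness of adjacencies is what kills countability, and countability is what kills freeness; the missing idea is rigidity: make the level\--$(i+1)$ material inside $y^i$ consist of literal copies of longer and longer prefixes (windows) of $y^{i+1}$, with all levels defined by a simultaneous recursion and separated by fast\--growing spacers. Then $\B(y^{i+1}) \subseteq \B(y^i)$ is automatic because every factor of $y^{i+1}$ sits inside some copied prefix, strictness comes from level markers, and countability comes from the fact that each bounded pattern occurs only in forced contexts. This is precisely the mechanism of the paper's two\--dimensional construction, where each horizontal line carries prefixes of the configuration one level down and each $x_{i+1}$ is taken from the orbit closure of $x_i$; as your proposal stands, that central step --- the actual content of the theorem --- is the one you leave unverified, and the specific mechanism you propose in its place provably fails.
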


\begin{proof}
We construct the orbit closure of the following kind of binary configuration. On the positive $x$-axis, the configuration has a $1$ at each coordinate $2^n$. On each horizontal line at the heights $2^n$, between the coordinates where the previous line contains a $1$, a prefix of the pattern in the lowest line appears. The decreasing chain is obtained as follows: First, take the limit of the sequence where the configuration is shifted $2^n$ steps to the left. Now, the lowest line is reduced to a single $1$ in the origin, and the other lines contain a sparser `copy' of the original configuration. The chain is obtained by repeating this procedure on the successive lines.

We will explicitly construct said sofic shift by defining several layers, each of which is a sofic shift, placing them over each other with some constraints and finally applying a block map that forgets almost all of the data. We start with the one-dimensional binary \emph{powers of $2$ shift}, in which the patterns $10^{n-1}10^{m-1}a$ and $a0^{n-1}10^{m-1}1$ are only allowed if $m = 2n$ and $a = 1$. The subshift is clearly $\PI^0_1$, and is the orbit closure of the configuration with $1$'s in the coordinates $2^n$ for all $n > 0$. The corresponding two-dimensional shift, which contains infinite horizontal lines of $1$'s in a sea of $0$'s, is thus sofic by Lemma~\ref{lemma:CoREAreSofic}, and we use it as the basis of our construction. On these horizontal lines, we may put an arbitrary number of \emph{dedicated points}.

We then present the \emph{powers of $2$ gadget}, which will control the way in which the dedicated points appear. For the time being, concentrate for simplicity's sake on the bottom line. If two points $A,B$ lie on the line with distance $d$, we want a point $C$ to appear $2d$ steps to the right of $B$, and if $d > 1$, another point $D$ $\frac{d}{2}$ steps to the left of $A$. This is achieved with six signals emitted by each point, presented in Figure~\ref{fig:Signals}. Each point emits slope-$1$ and slope-$\frac{1}{2}$ signals to the left, a signal up and down, and slope-$1$ and slope-$2$ signals to the right. The downward signal is destroyed when the other signals hit it, and they must be correctly matched on each side. If $d = 1$ in the above situation, then $A$ only emits the rightward signals, and creates a \emph{forbidden zone} on its left. The zone continues infinitely upward and to the left, and no dedicated points may be situated inside it. Two signals of different types (dashed and dotted lines in the figure) may always cross each other and the horizontal lines, and a dashed line may cross a downward signal that has already encountered another dashed line. The gray area in the figure is another forbidden zone. It consists of the rightmost $\frac{2}{3}$ of the space between two points and continues infinitely upwards. Also, no forbidden zone may appear below the lowest line.

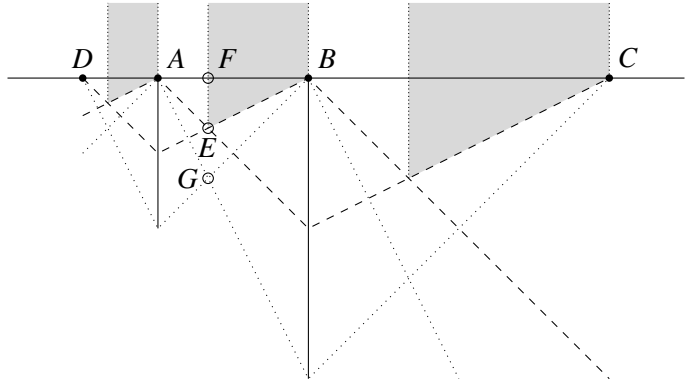
\begin{figure}[ht]

\begin{center}
\begin{tikzpicture}

\fill[white!85!black] (1+1/3,7) -- (2,7) -- (2,6) -- (1+1/3,5+2/3);
\fill[white!85!black] (2+2/3,7) -- (4,7) -- (4,6) -- (2+2/3,5+1/3);
\fill[white!85!black] (5+1/3,7) -- (8,7) -- (8,6) -- (5+1/3,4+2/3);

\draw (0,6) -- (9,6);
\fill (1,6) circle (0.05);
\fill (2,6) circle (0.05);
\fill (4,6) circle (0.05);
\fill (8,6) circle (0.05);

\draw (2+2/3,5+1/3) circle (0.07);
\draw (2+2/3,6) circle (0.07);
\draw (2+2/3,4+2/3) circle (0.07);

\draw (2,6) -- (2,4);
\draw[dashed] (1,6) -- (2,5) -- (4,6) -- (8,2);
\draw[dotted] (1,6) -- (2,4) -- (4,6) -- (6,2);

\draw (4,6) -- (4,2);
\draw[dashed] (1,5.5) -- (2,6) -- (4,4) -- (8,6);
\draw[dotted] (1,5) -- (2,6) -- (4,2) -- (8,6);

\node[above] () at (1,6) {$D$};
\node[above right] () at (2,6) {$A$};
\node[above right] () at (4,6) {$B$};
\node[above right] () at (8,6) {$C$};
\node[below] () at (2+2/3,5+1/3) {$E$};
\node[above right] () at (2+2/3,6) {$F$};
\node[left] () at (2+2/3,4+2/3) {$G$};

\draw[densely dotted] (1+1/3,5+2/3) -- (1+1/3,7);
\draw[densely dotted] (2,6) -- (2,7);
\draw[densely dotted] (2+2/3,5+1/3) -- (2+2/3,7);
\draw[densely dotted] (4,6) -- (4,7);
\draw[densely dotted] (5+1/3,4+2/3) -- (5+1/3,7);
\draw[densely dotted] (8,6) -- (8,7);

\end{tikzpicture}
\end{center}

\caption{The signals of the powers of $2$ gadget.}
\label{fig:Signals}
\end{figure}

We now drop the single line hypothesis, and require that all signals attempting to enter a forbidden zone are destroyed. On lines other than the first one, when an upward signal emitted by a point reaches the first horizontal line, it checks with a slope-$1$ \emph{check signal} (line segment $AB$ in Figure~\ref{fig:Forcing}) whether the (perhaps infinite) rectangle formed by the two lines and the left border of the next forbidden zone is at least as wide as it is high. If this is the case, it forces two consecutive points to appear one step to the right, and if not, the whole interval gets forbidden infinitely upwards. Figure~\ref{fig:Forcing} shows an interval in which the check succeeds, and the two points are forced to appear in the upper line.

\begin{figure}[ht]

\begin{center}
\begin{tikzpicture}[scale=1.5]

\fill[white!85!black] (1.2+.2/3,2-.2/3) -- (1.2+.2/3,3) -- (1.4,3) -- (1.4,2);
\draw[dotted] (1.2+.2/3,2-.2/3) -- (1.2+.2/3,3);
\draw[dotted] (1.4,3) -- (1.4,2);
\fill[white!85!black] (1.4+.4/3,2-.4/3) -- (1.4+.4/3,3) -- (1.8,3) -- (1.8,2);
\draw[dotted] (1.4+.4/3,2-.4/3) -- (1.4+.4/3,3);
\draw[dotted] (1.8,3) -- (1.8,2);
\fill[white!85!black] (1.8+.8/3,2-.8/3) -- (1.8+.8/3,3) -- (2.6,3) -- (2.6,2);
\draw[dotted] (1.8+.8/3,2-.8/3) -- (1.8+.8/3,3);
\draw[dotted] (2.6,3) -- (2.6,2);
\fill[white!85!black] (2.6+1.6/3,2-1.6/3) -- (2.6+1.6/3,3) -- (4.2,3) -- (4.2,2);
\draw[dotted] (2.6+1.6/3,2-1.6/3) -- (2.6+1.6/3,3);
\draw[dotted] (4.2,3) -- (4.2,2);

\fill[white!85!black] (0,3) -- (1.1,3) -- (1.1,2) --  (1,2) -- (1,1) -- (0,1/3);
\fill[white!85!black] (6,3) -- (5,3) -- (5,0) -- (6,0);
\draw[dotted] (1.1,3) -- (1.1,2);
\draw[dotted] (1,2) -- (1,1);
\draw[dashed] (1,1) -- (0,1/3);
\draw[dotted] (5,3) -- (5,0);

\draw (0,1) -- (6,1);
\draw (0,2) -- (6,2);
\fill (1,1) circle (0.05);
\fill (1.1,2) circle (0.05);
\fill (1.2,2) circle (0.05);
\fill (1.4,2) circle (0.05);
\fill (1.8,2) circle (0.05);
\fill (2.6,2) circle (0.05);
\fill (4.2,2) circle (0.05);

\draw[dotted] (1,2) -- (2,1);
\draw (2,1) circle (0.05);

\draw[dash pattern=on 1pt off 1pt] 
	(1.2,2) -- (1.2+.2/3,2-.2/3) -- 
	(1.4,2) -- (1.4+.4/3,2-.4/3) -- 
	(1.8,2) -- (1.8+.8/3,2-.8/3) -- 
	(2.6,2) -- (2.6+1.6/3,2-1.6/3) -- 
	(4.2,2) -- (5,2-.8);

\node[above left] () at (1,2) {$A$};
\node[above right] () at (2,1) {$B$};

\end{tikzpicture}
\end{center}

\caption{A successful check, which forces two points to appear on the upper line. The powers of $2$ gadget, in turn, forces more points to appear, until the interval ends. Not all signals are shown.}
\label{fig:Forcing}
\end{figure}
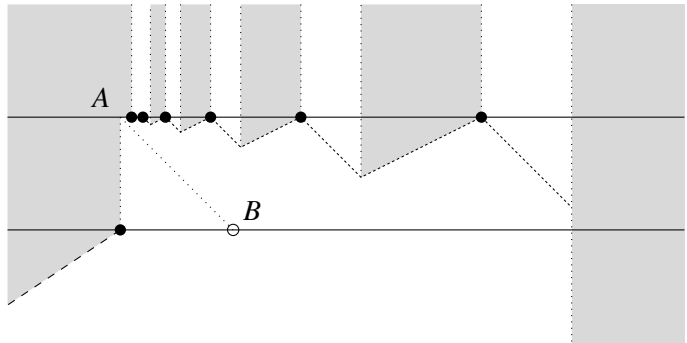

We then claim that signals emitted by points from different lines do not destructively interact with each other. Namely, if the distance of $A$ and $B$ in Figure~\ref{fig:Signals} is $d$, then the maximum distance between two points on the line $AF$ is at most $\frac{1}{2}\frac{1}{3}d = \frac{d}{6}$. Then the length of the downward signals emitted by these points is at most $\frac{d}{3}$, which is the length of the segment $EF$. Thus no signals emitted by these points propagate below the point $E$. Also, all emitted leftward signals are caught by downward ones. Finally, only the dotted signals and the downward signals that have already encountered a dashed signal are able to reach the dashed line $AE$, but these will just intersect without interaction, and no signal can reach the dotted line $AG$.

Consider then the sofic shift $X \subset \{0,1\}^{\Z^2}$ obtained by mapping every dedicated point to $1$, and the rest to $0$. We first prove that it is countable. Suppose first that a configuration of $X$ contains points in only one horizontal line. There might be forbidden zones extending infinitely downward from the line. On intervals between two such forbidden zones, no points are ever seen, since the check signals always fail. On an infinite interval free of forbidden zones, we either have an infinite sequence of points determined by the position of the leftmost one (which always exists), or just one lone point.

If points are seen on two lines, we know exactly the positions of all lines. Consider the leftmost point on the lowest occupied line. Because of the forbidden zone it creates on its left, it is the overall leftmost point. If there is another point on the lowest line, then all points on the line are determined by the gadget. Next, all points of the line above are determined by the upward signals and the gadget. By induction, all the points of the configuration are determined. If there are no other points on the lowest line, the next line is determined by the gadget, and we repeat the above argument. So all in all, a configuration is determined by the position of the lowest and leftmost point, and whether there are other points on the same line or some other line, and $X$ is countable.

Finally, we show that $X$ contains an infinite downward chain with relation to $\succ$. Consider the point $x_1$ containing infinitely many points on the lowest line. The gadgets force $x_1$ to consist of vertical `stripes' of exponential width that begin with a $1$ on the lowest line, followed by a prefix of the lowest line on the second one (and the patterns on the higher lines determined by it), and a forbidden zone. See Figure~\ref{fig:Forcing} to better visualize this. The prefixes become arbitrarily long as the width of the stripes increases. In the orbit closure of $x_1$, we thus find the point $x_2$ containing a $1$ in the lowest line followed by the stripe pattern on the first line of $x_1$ lifted to the second one, which creates a similar prefix pattern on the third line. We repeat the argument to find $x_3$, on which the striped pattern lies on the third line, and inductively we obtain the chain $(x_i)_{i \geq 1}$. Clearly $x_i \succ x_{i+1}$ holds properly for all $i$, since $x_{i+1}$ was chosen from the orbit closure of $x_i$, and the claim is proved. 
\end{proof}

We now consider the structure of finite subpattern posets. We first show that they do not capture the whole class of finite posets: for instance, no nontrivial lattice can occur as a subpattern poset. This is an easy consequence of the results in \cite{BaDuJe08}.

\begin{proposition}
A countable two-dimensional SFT whose subpattern poset is nontrivial contains two periodic points in distinct $\approx$-classes.
\end{proposition}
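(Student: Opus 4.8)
The plan is to reformulate the subpattern order through orbit closures and then locate two distinct fully periodic orbits. Write $\overline{\mathcal{O}(x)}$ for the orbit closure of a configuration $x$, i.e. the smallest subshift containing $x$; equivalently, $\overline{\mathcal{O}(x)}$ is the set of configurations all of whose patterns occur in $x$. Then $y \preccurlyeq x$ holds exactly when $\B(y) \subseteq \B(x)$, which is the same as $\overline{\mathcal{O}(y)} \subseteq \overline{\mathcal{O}(x)}$, so $x \approx y$ iff $\overline{\mathcal{O}(x)} = \overline{\mathcal{O}(y)}$. In particular the $\approx$-class of $x$ is a minimal element of the subpattern poset iff $\overline{\mathcal{O}(x)}$ is a minimal subshift, and ``nontrivial poset'' means precisely that $X$ is nonempty and not itself minimal. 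Since a countable minimal subshift has an isolated point and hence, by minimality, consists entirely of isolated points, it is finite, so it is a single fully periodic orbit; and two distinct periodic orbits have different languages and therefore lie in distinct $\approx$-classes. Thus it suffices to exhibit two distinct minimal subshifts of $X$.

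First I would dispose of the finite case: if $X$ is finite it is a disjoint union of fully periodic orbits, and nontriviality of the poset forces at least two of them, which are the required points. So assume $X$ is infinite. By Lemma~\ref{lem:CountableRanked} the countable subshift $X$ satisfies $X^{(\mu)} = \emptyset$ for some $\mu$; taking $\mu$ least, compactness of the nested nonempty derivatives rules out $\mu$ being a limit, so $\mu = \nu + 1$ and $X^{(\nu)}$ is a nonempty subshift all of whose points are isolated. Being compact and discrete it is finite, hence a finite disjoint union of fully periodic orbits, each of which is a minimal subshift of $X$.

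It then remains to rule out the possibility that $X$ has a unique minimal subshift $M = \overline{\mathcal{O}(p)}$ with $p$ fully periodic. In that case every orbit appearing in $X^{(\nu)}$ is a minimal subshift equal to $M$, so $X^{(\nu)} = M$ is a single orbit. Here I would invoke the terminal-structure results of \cite{BaDuJe08}: since $X$ is an infinite countable two-dimensional SFT, its second-to-last nonempty derivative contains a configuration $z$ with exactly one vector of periodicity $v$. Such a $z$ is not fully periodic, so $z \notin X^{(\nu)} = M$, yet the orbit of $z$ accumulates onto fully periodic points of $X^{(\nu)} = M$, so both transversal tails of $z$ (in the directions perpendicular to $v$) are asymptotic to $M$. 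The decisive step is to convert this single defect into a second periodic orbit: because $X$ is of finite type and there are only finitely many $M$-patterns of any bounded size, two transversal slabs of $z$ far from the defect carry identical patterns, and the SFT rules let me splice $z$ periodically in the direction $v^\perp$. This yields a fully periodic configuration still containing a pattern of $z$ that is absent from $M$, hence a fully periodic orbit distinct from $M$, contradicting uniqueness.

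The main obstacle is exactly this last periodization step, i.e. showing that an infinite countable two-dimensional SFT cannot have a single fully periodic orbit as its entire terminal derivative. The difficulty is that in an arbitrary subshift a ``domain wall'' with the same periodic background on both sides can be genuinely non-periodic, so one must use the finite-type hypothesis (a cut-and-paste and pigeonhole periodization) to manufacture the second periodic orbit; this is the content of the structure theorem of \cite{BaDuJe08} that makes the proposition an easy consequence. Everything else --- the orbit-closure reformulation, the finite case, and the extraction of $X^{(\nu)}$ via Lemma~\ref{lem:CountableRanked} --- is routine bookkeeping.
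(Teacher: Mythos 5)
Your preparatory work is sound and, up to the decisive step, your route parallels the paper's: the orbit-closure reformulation, the fact that minimal subshifts of a countable subshift are finite periodic orbits, the finite case, the extraction of the last nonempty derivative $X^{(\nu)}$ as a finite union of periodic orbits, and the appeal to \cite{BaDuJe08} for a point $z$ with exactly one vector of periodicity (the paper invokes exactly this and nothing more from that reference). The genuine gap is in the periodization step, which you yourself flag as the main obstacle, and it occurs in two places. First, ``the orbit of $z$ accumulates onto $M$, so both transversal tails of $z$ are asymptotic to $M$'' is a non sequitur: accumulation only yields arbitrarily large $M$-patches somewhere in $z$; it does not exclude infinitely many parallel defect walls with ever larger $M$-patches between them, so the ``single defect'' picture is unjustified. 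Second, even granting a single defect, your pigeonhole only produces two identical transversal slabs somewhere; if both lie on the same side of the defect, splicing the region between them reproduces a point of $M$ and contradicts nothing. To capture a pattern absent from $\B(M)$ you need identical slabs on \emph{opposite} sides of the defect, and matching such slabs requires knowing that both tails are patches of the same finite orbit $M$ and then aligning shifts of its periodic point --- i.e., exactly the tail claim that was not proved. Finally, your fallback of citing ``the structure theorem of \cite{BaDuJe08}'' for this step is circular in the context of this paper: the statement that an infinite countable two-dimensional SFT contains two periodic points in different orbits is presented by the paper as a \emph{consequence} of this very proposition, not as a citable result of \cite{BaDuJe08}; also note that the weaker phrase ``at least two fully periodic points'' would not even suffice, since two such points can share an orbit and hence an $\approx$-class.

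For comparison, the paper sidesteps the delicate cross-defect splicing by a dichotomy on the single-direction point $y$ relative to the assumed unique periodic point $x$: either every large square of $y$ contains a small pattern absent from $x$, in which case any minimal subshift of $\overline{\mathcal{O}(y)}$ is a second periodic orbit; or $y$ contains arbitrarily large patches of $x$'s pattern, in which case the periodicity of $y$ turns these into infinite strips of width at least the window size, and gluing infinitely many copies of the broken region along such strips produces uncountably many points of $X$, contradicting countability rather than manufacturing a second periodic point. Your two missing sub-steps can in fact be repaired in the same spirit (same-side splicing together with uniqueness of $M$ yields the tail claim, and a countability argument rules out infinitely many walls), but as written your proof does not contain these arguments, and the step they are needed for is the entire content of the proposition.
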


\begin{proof}
Assume the contrary, and let $X$ be a countable two-dimensional SFT which has exactly one periodic point $x \in X$ modulo $\approx$-equivalence, and let $r > 0$ be such that $x = \sigma_{(0,r)}(x) = \sigma_{(r,0)}(x)$. Let $m \in \N$ be the window size of $X$. If $X$ had a nontrivial subpattern poset, \cite{BaDuJe08} would imply the existence of a point $y \in X$ with exactly one direction of periodicity, say $(p,q) \in \Z^2$. We have two possibilities.
\begin{enumerate}
\item There exists $n \in \N$ such that every $(n \times n)$-square of $y$ contains a coordinate $d \in \Z^2$ such that the $(r \times r)$-square pattern of $y$ whose lower left corner is at $d$ does not appear in $x$. But then $x$ is not the only minimal point of $X$, and since $X$ contains only periodic minimal points, this is a contradiction.
\item For all $n \in \N$, an $(n \times n)$-square occurring in $x$ also occurs in $y$. In particular, this holds for $n = \max(2r|p|,2r|q|,2m)$. Since we have $\sigma_{(r|p|,r|q|)}(y) = y$, this implies that $y$ contains infinite strips of thickness at least $\max(r,m)$ consisting of the periodic pattern of $x$, and since $y \neq x$, the periodic pattern is broken between some of them. But now we can build an uncountable number of points in $X$ by joining an infinite number of these stripes together from the periodic areas, a contradiction.
\end{enumerate}
Thus, $X$ has a trivial subpattern poset.
\end{proof}

This means that if the subpattern poset $P$ of a countable two-dimensional SFT is nontrivial, then $P$ contains at least two minimal elements, and that if a countable two-dimensional SFT $X$ is infinite, it contains at least two periodic points in different orbits.

In \cite{BaDuJe08}, a proof sketch was given for the fact that for all $k \in \N$, the linearly ordered poset $(\{1, \ldots, k\}, <)$ can be order-embedded in the subpattern poset of a countable two-dimensional SFT. We generalize this with the following embedding result, for which we do not know an essentially simpler proof.

\begin{proposition}
\label{prop:OrderEmbed}
All finite posets can be order-embedded in the subpattern poset of some countable two-dimensional SFT. Furthermore, the subpattern poset itself can be made finite.
\end{proposition}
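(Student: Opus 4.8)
The plan is to reduce the statement to a single geometric condition about orbit closures, and then to realize that condition with a forced, self-similar signal construction in the spirit of Example~\ref{ex:Diamonds} and the proof of Theorem~\ref{thm:SoficChain}. For a configuration $x$ write $\overline{O}(x)$ for its orbit closure, i.e. the smallest subshift containing $x$; I will use throughout that, for configurations $x,y$, one has $x \succcurlyeq y$ if and only if $\overline{O}(y) \subseteq \overline{O}(x)$, since $\overline{O}(y) \subseteq \overline{O}(x)$ is equivalent to $\B(\overline{O}(y)) \subseteq \B(\overline{O}(x))$, which is exactly pattern inclusion.

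First I would set up the reduction. Given a finite poset $(P,\leq)$, assign to each $p \in P$ a \emph{private symbol} $s_p$, and write $D(p) = \{ q \in P \mid q \leq p \}$ for the down-set of $p$. The goal becomes a countable SFT $X$ containing, for every $p$, a configuration $x_p$ such that \textbf{(i)} the symbol $s_q$ occurs in $x_p$ exactly when $q \leq p$, and \textbf{(ii)} $q \leq p$ implies $x_q \in \overline{O}(x_p)$. Granting these, the map $p \mapsto [x_p]$ into $X/\!\!\approx$ is an order-embedding: if $q \leq p$, then (ii) gives $\overline{O}(x_q) \subseteq \overline{O}(x_p)$, hence $x_p \succcurlyeq x_q$; conversely, if $q \not\leq p$, then by (i) the one-cell pattern $s_q$ occurs in $x_q$ but not in $x_p$, so $x_p \not\succcurlyeq x_q$. (Injectivity is then automatic by antisymmetry.) Thus the entire difficulty is concentrated in the orbit-closure inclusions (ii); the incomparability directions are handled for free by the private symbols.

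For the construction itself, I would build $x_p$ as a forced hierarchical layout that displays, at unboundedly large scales, cleanly isolated copies of the gadget of each $q \leq p$, so that zooming in recovers $x_q$ as a limit. Concretely I would use a dedicated baseline together with matched signals, as in the diamonds and in Figure~\ref{fig:Forcing}, to force a nested, self-similar family of \emph{frames}, each frame carrying the private symbol of its element and containing, recursively and side by side with growing separations, isolated sub-frames for every element strictly below it. Growing gaps and sizes (playing the role of the $m_i$ in Example~\ref{ex:Diamonds}) guarantee that shifting the center of a scale-$r$ copy of the $q$-gadget to the origin and letting $r \to \infty$ sends every other symbol to infinity, so the limit is $x_q$; this yields (ii). Since a $q$-frame only ever contains symbols $s_r$ with $r \leq q$, property (i) holds. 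Note that although a general poset is not tree-like and cannot be encoded by literal geometric containment, we do not need the nesting itself to encode the order: it suffices that each $x_p$ contain arbitrarily large regions agreeing with each $x_q$ for $q \leq p$, surrounded by ever more background, which a self-similar side-by-side listing of sub-gadgets provides.

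Countability and the finiteness of the poset would be handled exactly as in Example~\ref{ex:Diamonds}: the layout is forced, so once the outermost frame (equivalently, the top element $p$) and a base position are fixed, the whole configuration is determined up to finitely many degenerate or limit configurations (infinite frames, bare signals, pure background), which form a countable set by a routine case analysis. To make the entire subpattern poset finite, I would keep the forced structure deterministic, so that all configurations displaying the same down-set $D(p)$ have identical pattern sets and collapse to a single $\approx$-class, arranging the degenerate limits so that they either coincide with lower classes (the all-background point serving as a bottom element) or contribute only finitely many further classes. The main obstacle is the tension between these requirements: countability forces the configurations to be essentially deterministic, which makes their pattern sets rigid, yet (ii) demands that each $x_p$ nonetheless contain arbitrarily large isolated copies of every $x_q$ with $q \leq p$ and of none with $q \not\leq p$. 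Designing finitely many local forbidden-pattern rules that simultaneously force this self-similar layout, preclude any spurious orbit-closure inclusions or stray private symbols, and collapse all positional and size freedom into finitely many $\approx$-classes is the technical heart; I expect it to require several interacting layers — a powers-of-two-type spine generating the growing scales, a signalling layer forcing the nesting, and a labelling layer carrying the private symbols — much like the multi-layer bookkeeping in the proof of Theorem~\ref{thm:SoficChain}, and to be the only genuinely laborious part of the argument.
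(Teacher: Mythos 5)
Your proposal follows essentially the same route as the paper's proof, and your reduction is exactly the right one: the paper also handles incomparability by giving each poset element its own fresh symbols (each $f(x)$ "extends the alphabet with completely new symbols" and a private unary background), and handles the order relation by making $f(x)$ contain infinitely many growing \emph{data rectangles} filled with patterns of $f(y)$ for the predecessors $y$ of $x$ --- precisely your "arbitrarily large regions agreeing with each $x_q$, surrounded by ever more background," which gives $\overline{O}(f(y)) \subseteq \overline{O}(f(x))$. Your observation that literal geometric nesting cannot encode a non-tree poset, and that a side-by-side listing suffices, also matches the paper, which stacks $|p(x)|$ separate sequences of data rectangles, one per \emph{immediate} predecessor (transitivity of $\succcurlyeq$ then takes care of the rest of the down-set, so you never need copies of all $q \leq p$ explicitly).

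The genuine gap is the one you flag yourself: you postulate that finitely many local rules can force the rigid self-similar layout, but never exhibit a mechanism, and that existence claim \emph{is} the proposition. The paper's device for this is concrete and simpler than the multi-layer machinery you anticipate. Below a dedicated half-line it places a column of \emph{ruler rectangles}, the $n$th of size $\phi(n,r(x)) \times n$, where $\phi(n,1) = n$ is forced by a diagonal signal (squares) and $\phi(n,r) = \sum_{i=1}^n \phi(i,r-1)$ is forced by stacking a sequence of rank-$(r-1)$ rulers inside each rank-$r$ rectangle; so the scales grow polynomially in $n$ (degree $= r(x)$), not exponentially, and the whole construction is a recursion on the rank of the poset element rather than on a powers-of-two spine (which, as used in Theorem~\ref{thm:SoficChain}, naturally produces a sofic shift, not an SFT --- a point your sketch glosses over). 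The data rectangles above the line are forced to align their ends with the rulers, their heights grow by the constants $k(y)$, and their contents are patterns of $f(y)$ anchored at the right border; this anchoring is exactly what removes all positional freedom, yielding countability and the finiteness of the subpattern poset. To turn your plan into a proof you would need to supply some such explicit forcing mechanism and check it is local; everything else in your write-up is already aligned with the paper.
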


\begin{proof}
The idea of the construction is the following. Some configurations of the SFT correspond to elements of the poset. Such a configuration $x$ is either periodic, if the corresponding element lies at the bottom of the poset, and otherwise consists of infinitely many `boxes'. Each box contains a pattern from some other configuration whose poset element is lower than the element corresponding to $x$. The boxes and their contents are carefully aligned to ensure the countability of the SFT.

We now present the construction in more detail. Let $(S,\geq)$ be a finite poset, and for all $x \in S$, define
\[ r(x) = \max \{ n \in \N \;|\; \exists y_1, \ldots, y_n \in S: x = y_1 > \ldots > y_n \} - 1. \]
This is one less than the maximal length of a descending chain beginning from $x$. Define $M = r^{-1}(0)$, the set of minimal elements of $S$. Define also
\[ p(x) = \{ y \in S \;|\; x > y, \nexists z \in S: x > z > y \}, \]
the set of immediate predecessors of $x$. We also inductively define $k(x) = 1$ for all $x \in M$, and $k(x) = 1 + \sum_{y \in p(x)} k(y)$ for $x \notin M$. This is an auxiliary `height' function we need in our construction.

We build a two-dimensional SFT $X$ in whose subpattern poset $S$ can be order-embedded via $f : S \to X$. First, for each $x \in M$, $X$ contains a unary point $f(x)$. Let then $x \in S - M$. We assume that $f(y)$ has already been defined for all $y \in S$ with $r(y) < r(x)$, using the construction we are about to present if $y \notin M$.

The point $f(x)$ contains a horizontal \emph{dedicated half-line}, starting from the origin and extending right. Below the line there is a vertical sequence of \emph{ruler rectangles}, starting with one of size $1 \times 1$ below the origin. The $n$th rectangle, in general, has size $\phi(n,r(x)) \times n$, where $\phi(n,1) = n$ and $\phi(n,r) = \sum_{i=1}^n \phi(i,r-1)$. If $r(x) = 1$, this is achieved with a diagonal signal forcing the rectangles to be squares, and in general by stacking a sequence of lower-rank rectangles inside the large ones. See Figure~\ref{fig:Rectangles} for a visualization, and note that $\phi(n,r)$ is a polynomial of degree $r$ in $n$.

\begin{figure}[ht]

\begin{center}
\begin{tikzpicture}[rotate=-90]

\draw[thick,red] (.1,.1) -- (.1,9.9) -- (2.9,9.9) -- (2.9,.1) -- cycle;

\draw[thick,blue] (.2,.2) -- (.8,.2) -- (.8,.8) -- (.2,.8) -- cycle;
\draw[thick,blue] (.2,1.2) -- (1.8,1.2) -- (1.8,3.8) -- (.2,3.8) -- cycle;
\draw[thick,blue] (.2,4.2) -- (2.8,4.2) -- (2.8,9.8) -- (.2,9.8) -- cycle;

\draw[thick,black] (.3,.3) -- (.7,.3) -- (.7,.7) -- (.3,.7) -- cycle;
\draw[thick,black] (.3,.3) -- (.7,.7);
\draw[thick,black] (.3,1.3) -- (.7,1.3) -- (.7,1.7) -- (.3,1.7) -- cycle;
\draw[thick,black] (.3,1.3) -- (.7,1.7);
\draw[thick,black] (.3,4.3) -- (.7,4.3) -- (.7,4.7) -- (.3,4.7) -- cycle;
\draw[thick,black] (.3,4.3) -- (.7,4.7);
\draw[thick,black] (.3,2.3) -- (1.7,2.3) -- (1.7,3.7) -- (.3,3.7) -- cycle;
\draw[thick,black] (.3,2.3) -- (1.7,3.7);
\draw[thick,black] (.3,5.3) -- (1.7,5.3) -- (1.7,6.7) -- (.3,6.7) -- cycle;
\draw[thick,black] (.3,5.3) -- (1.7,6.7);
\draw[thick,black] (.3,7.3) -- (2.7,7.3) -- (2.7,9.7) -- (.3,9.7) -- cycle;
\draw[thick,black] (.3,7.3) -- (2.7,9.7);

\draw (0,0) grid (3,10);

\end{tikzpicture}
\end{center}

\caption{A ruler rectangle of size $\phi(3,3) \times 3$.}
\label{fig:Rectangles}
\end{figure}
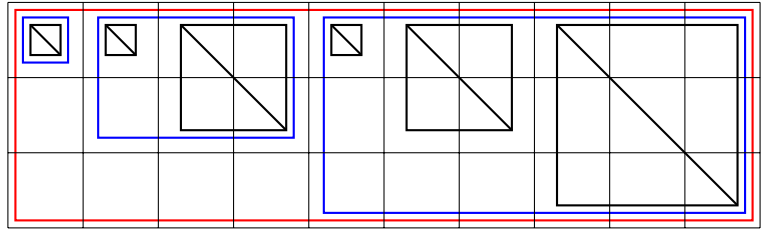

Above the dedicated half-line we put $|p(x)|$ sequences of \emph{data rectangles}, one for each $y \in p(x)$, stacked on top of each other. The data rectangles of $y$ contain patterns from the configuration $f(y)$. The left and right ends of the data rectangles are forced to align with those of the ruler rectangles, and the heights increase by the respective $k(y)$ every step. If $y \in M$, the height of the $n$th rectangle of the sequence is $n$, and it will be filled with the unary pattern of $f(y)$. If $y \notin M$, it is created using this construction, and consists of a finite number of horizontal sequences of rectangles whose total width increases by a constant $k(y)$ every step. For each $n \in \N$, the $n$th data rectangle of the sequence corresponding to $y$ has height $n \cdot k(y)$, and it is forced to contain a pattern of $f(y)$ aligned with the right border of the rectangle as in Figure~\ref{fig:RectAlign}. The linear growth is easily forced by SFT rules. In the construction, each $f(x)$ will extend the alphabet with completely new symbols (apart from the ones used to simulate the $f(y)$), and each region in the construction will have a different unary background symbol to differentiate them from each other.

\begin{figure}[ht]

\begin{center}
\begin{tikzpicture}

\fill[black!30!white] (1.5,.1) rectangle (3,.85);
\fill[black!30!white] (1.5,1.6) rectangle (3,.85);
\fill[black!30!white] (.7,.4) rectangle (1.5,.85);
\fill[black!30!white] (.7,1.3) rectangle (1.5,.85);
\fill[black!30!white] (.5,.7) rectangle (.7,.85);
\fill[black!30!white] (.5,1) rectangle (.7,.85);

\begin{scope}[shift={(4,.25)}]
\fill[black!30!white] (3,-.15) rectangle (5,.85);
\fill[black!30!white] (3,1.85) rectangle (5,.85);
\fill[black!30!white] (1.5,.1) rectangle (3,.85);
\fill[black!30!white] (1.5,1.6) rectangle (3,.85);
\fill[black!30!white] (.7,.4) rectangle (1.5,.85);
\fill[black!30!white] (.7,1.3) rectangle (1.5,.85);
\fill[black!30!white] (.5,.7) rectangle (.7,.85);
\fill[black!30!white] (.5,1) rectangle (.7,.85);
\end{scope}

\draw (1.5,.1) rectangle (3,.85);
\draw (1.5,1.6) rectangle (3,.85);
\draw (.7,.4) rectangle (1.5,.85);
\draw (.7,1.3) rectangle (1.5,.85);
\draw (.5,.7) rectangle (.7,.85);
\draw (.5,1) rectangle (.7,.85);

\begin{scope}[shift={(4,.25)}]
\draw (3,-.15) rectangle (5,.85);
\draw (3,1.85) rectangle (5,.85);
\draw (1.5,.1) rectangle (3,.85);
\draw (1.5,1.6) rectangle (3,.85);
\draw (.7,.4) rectangle (1.5,.85);
\draw (.7,1.3) rectangle (1.5,.85);
\draw (.5,.7) rectangle (.7,.85);
\draw (.5,1) rectangle (.7,.85);
\end{scope}

\draw[thick,dashed] (0,0) -- (10,0);

\draw[thick] (0,.1) -- (3,.1) -- (3,1.6) -- (0,1.6);
\draw[thick] (3,.1) -- (9,.1) -- (9,2.1) -- (3,2.1) -- (3,1.6);
\draw[thick] (9,.1) -- (10,.1);
\draw[thick] (9,2.1) -- (9,2.6) -- (10,2.6);

\draw[thick] (0,-.1) -- (10,-.1);
\draw[thick] (3,-.1) -- (3,-.3);
\draw[thick] (9,-.1) -- (9,-.3);

\end{tikzpicture}
\end{center}

\caption{The alignment of $f(y)$ in the data rectangles. The ruler rectangles lie below the horizontal line.}
\label{fig:RectAlign}
\end{figure}
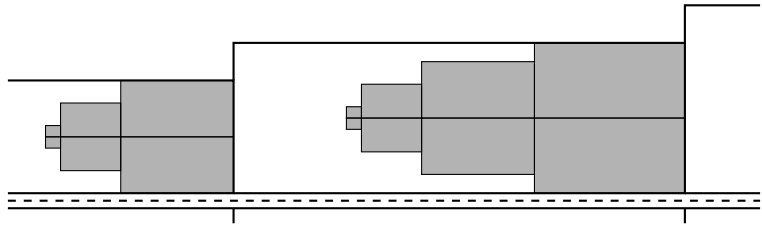

That the construction can be done using only SFT rules is clear, as is the fact that $f$ becomes an order-embedding of $(S,\geq)$ into $(X,\succcurlyeq)$. Furthermore, a simple case analysis shows that $X$ is countable and its subpattern poset is finite. 
\end{proof}

\section{Conclusions}

In this paper, we have presented several constructions related to the computational and topological structure of countable sofic and finite type subshifts. In Theorem~\ref{thm:DerivativesCanCompute}, we presented a single countable SFT whose $k$th derivative is $\PI^0_{2k+1}$-complete for all $k$, the highest possible among the $k$th derivatives of all $\PI^0_1$ subshifts. We also studied the subpattern posets of countable subshifts, our main result being Theorem~\ref{thm:SoficChain}. The theorem is much more interesting in conjunction with Conjecture~\ref{conj:NoDown} than in itself, since if the conjecture is true, the sofic counterexample may be helpful in finding a proof for it.

We have achieved the exact maximal computational strength of the $k$th derivative of a countable SFT, but it would be interesting to see what happens in the first limit ordinal and beyond.

\begin{question}
Let $\lambda$ be any computable ordinal. What is the maximal computational power of the $\lambda$th derivative of a countable SFT? 
\end{question}

In particular, does there exist a countable SFT whose $\omega$th derivative is $\PI^0_k$-hard for all $k$? Can we reach higher levels of the hyperarithmetical hierarchy this way? Since we are not experts in recursion theory, it may be the case that these questions have already been answered in some form, but we are simply not aware of these results.

We also have some interesting open questions regarding the subpattern posets of subshifts. We have shown here that all finite posets can be order-embedded in the subpattern poset of some countable SFT. For which infinite posets does this hold? As a more concrete question, let $\F = \{ A \subset \N \;|\; |A| < \infty \}$, and consider the partial order $<$ defined on $\F$ by $A < B$ iff $B \subset A$.

\begin{question}
Let $\F' \subset \F$ be computable. Can the induced poset $(\F',<)$ be order-embedded in the subpattern poset of some countable SFT or sofic shift?
\end{question}

Example~\ref{ex:YesSide} shows that the poset $(\{ \{n\} \;|\; n \in \N \}, <)$ can be order-embedded in a countable SFT. Also, Theorem~\ref{thm:SoficChain} shows that the poset $(\{ \{1,\ldots,n\} \;|\; n \in \N \}, <)$ can be order-embedded in a countable sofic shift, while Conjecture~\ref{conj:NoDown} claims that this is impossible for an SFT. Note that the above question does not clash with Proposition~\ref{prop:NoUp}, since we have inverted the subset relation.

\bibliographystyle{eptcs}
\bibliography{bib}{}

\def\ocirc#1{\ifmmode\setbox0=\hbox{$#1$}\dimen0=\ht0 \advance\dimen0
  by1pt\rlap{\hbox to\wd0{\hss\raise\dimen0
  \hbox{\hskip.2em$\scriptscriptstyle\circ$}\hss}}#1\else {\accent"17 #1}\fi}
\begin{thebibliography}{10}
\providecommand{\bibitemdeclare}[2]{}
\providecommand{\surnamestart}{}
\providecommand{\surnameend}{}
\providecommand{\urlprefix}{Available at }
\providecommand{\url}[1]{\texttt{#1}}
\providecommand{\href}[2]{\texttt{#2}}
\providecommand{\urlalt}[2]{\href{#1}{#2}}
\providecommand{\doi}[1]{doi:\urlalt{http://dx.doi.org/#1}{#1}}
\providecommand{\bibinfo}[2]{#2}

\bibitemdeclare{inproceedings}{BaDuJe08}
\bibitem{BaDuJe08}
\bibinfo{author}{Alexis \surnamestart Ballier\surnameend},
  \bibinfo{author}{Bruno \surnamestart Durand\surnameend} \&
  \bibinfo{author}{Emmanuel \surnamestart Jeandel\surnameend}
  (\bibinfo{year}{2008}): \emph{\bibinfo{title}{{Structural aspects of
  tilings}}}.
\newblock In \bibinfo{editor}{Pascal~Weil \surnamestart
  Susanne~Albers\surnameend}, editor: {\sl \bibinfo{booktitle}{{Proceedings of
  the 25th Annual Symposium on the Theoretical Aspects of Computer Science}}},
  \bibinfo{publisher}{IBFI Schloss Dagstuhl}, \bibinfo{address}{Bordeaux,
  France}, pp. \bibinfo{pages}{61--72}.
\newblock \urlprefix\url{http://hal.archives-ouvertes.fr/hal-00145800}.
\newblock \bibinfo{note}{11 pages}.

\bibitemdeclare{article}{Be66}
\bibitem{Be66}
\bibinfo{author}{Robert \surnamestart Berger\surnameend}
  (\bibinfo{year}{1966}): \emph{\bibinfo{title}{The undecidability of the
  domino problem}}.
\newblock {\sl \bibinfo{journal}{Mem. Amer. Math. Soc. No.}}
  \bibinfo{volume}{66}.
\newblock \bibinfo{note}{72 pages}.

\bibitemdeclare{incollection}{CeRe98}
\bibitem{CeRe98}
\bibinfo{author}{D.~\surnamestart Cenzer\surnameend} \& \bibinfo{author}{J.~B.
  \surnamestart Remmel\surnameend} (\bibinfo{year}{1998}):
  \emph{\bibinfo{title}{{$\Pi^0_1$} classes in mathematics}}.
\newblock In: {\sl \bibinfo{booktitle}{Handbook of recursive mathematics,
  {V}ol.\ 2}}, {\sl \bibinfo{series}{Stud. Logic Found. Math.}}
  \bibinfo{volume}{139}, \bibinfo{publisher}{North-Holland},
  \bibinfo{address}{Amsterdam}, pp. \bibinfo{pages}{623--821},
  \doi{10.1016/S0049-237X(98)80046-3}.

\bibitemdeclare{article}{CeClSmSoWa86}
\bibitem{CeClSmSoWa86}
\bibinfo{author}{Douglas \surnamestart Cenzer\surnameend},
  \bibinfo{author}{Peter \surnamestart Clote\surnameend},
  \bibinfo{author}{Rick~L. \surnamestart Smith\surnameend},
  \bibinfo{author}{Robert~I. \surnamestart Soare\surnameend} \&
  \bibinfo{author}{Stanley~S. \surnamestart Wainer\surnameend}
  (\bibinfo{year}{1986}): \emph{\bibinfo{title}{Members of countable
  {$\Pi^0_1$} classes}}.
\newblock {\sl \bibinfo{journal}{Ann. Pure Appl. Logic}}
  \bibinfo{volume}{31}(\bibinfo{number}{2-3}), pp. \bibinfo{pages}{145--163},
  \doi{10.1016/0168-0072(86)90067-9}.
\newblock \bibinfo{note}{Special issue: second Southeast Asian logic conference
  (Bangkok, 1984)}.

\bibitemdeclare{article}{Du99}
\bibitem{Du99}
\bibinfo{author}{Bruno \surnamestart Durand\surnameend} (\bibinfo{year}{1999}):
  \emph{\bibinfo{title}{Tilings and quasiperiodicity}}.
\newblock {\sl \bibinfo{journal}{Theoret. Comput. Sci.}}
  \bibinfo{volume}{221}(\bibinfo{number}{1-2}), pp. \bibinfo{pages}{61--75},
  \doi{10.1016/S0304-3975(99)00027-4}.
\newblock \bibinfo{note}{ICALP '97 (Bologna)}.

\bibitemdeclare{incollection}{DuRoSh10}
\bibitem{DuRoSh10}
\bibinfo{author}{Bruno \surnamestart Durand\surnameend},
  \bibinfo{author}{Andrei \surnamestart Romashchenko\surnameend} \&
  \bibinfo{author}{Alexander \surnamestart Shen\surnameend}
  (\bibinfo{year}{2010}): \emph{\bibinfo{title}{Effective closed subshifts in
  1{D} can be implemented in 2{D}}}.
\newblock In: {\sl \bibinfo{booktitle}{Fields of logic and computation}}, {\sl
  \bibinfo{series}{Lecture Notes in Comput. Sci.}} \bibinfo{volume}{6300},
  \bibinfo{publisher}{Springer}, \bibinfo{address}{Berlin}, pp.
  \bibinfo{pages}{208--226}.

\bibitemdeclare{article}{He69}
\bibitem{He69}
\bibinfo{author}{G.~A. \surnamestart Hedlund\surnameend}
  (\bibinfo{year}{1969}): \emph{\bibinfo{title}{Endomorphisms and automorphisms
  of the shift dynamical system}}.
\newblock {\sl \bibinfo{journal}{Math. Systems Theory}} \bibinfo{volume}{3},
  pp. \bibinfo{pages}{320--375}.

\bibitemdeclare{inproceedings}{JeVa11}
\bibitem{JeVa11}
\bibinfo{author}{Emmanuel \surnamestart Jeandel\surnameend} \&
  \bibinfo{author}{Pascal \surnamestart Vanier\surnameend}
  (\bibinfo{year}{2011}): \emph{\bibinfo{title}{{$\Pi_1^0$ sets and tilings}}}.
\newblock In: {\sl \bibinfo{booktitle}{Theory and Applications of Models of
  Computation (TAMC)}}, {\sl \bibinfo{series}{Lecture Notes in Computer
  Science}} \bibinfo{volume}{6648}, pp. \bibinfo{pages}{230--239},
  \doi{10.1007/978-3-642-20877-5\_24}.

\bibitemdeclare{article}{KrShWa60}
\bibitem{KrShWa60}
\bibinfo{author}{G.~\surnamestart Kreisel\surnameend},
  \bibinfo{author}{J.~\surnamestart Shoenfield\surnameend} \&
  \bibinfo{author}{Hao \surnamestart Wang\surnameend} (\bibinfo{year}{1960}):
  \emph{\bibinfo{title}{Number theoretic concepts and recursive
  well-orderings}}.
\newblock {\sl \bibinfo{journal}{Arch. Math. Logik Grundlagenforsch.}}
  \bibinfo{volume}{5}, pp. \bibinfo{pages}{42--64}.

\bibitemdeclare{book}{Ku66}
\bibitem{Ku66}
\bibinfo{author}{K.~\surnamestart Kuratowski\surnameend}
  (\bibinfo{year}{1966}): \emph{\bibinfo{title}{Topology. {V}ol. {I}}}.
\newblock \bibinfo{series}{New edition, revised and augmented. Translated from
  the French by J. Jaworowski}, \bibinfo{publisher}{Academic Press},
  \bibinfo{address}{New York}.

\bibitemdeclare{book}{LiMa95}
\bibitem{LiMa95}
\bibinfo{author}{Douglas \surnamestart Lind\surnameend} \&
  \bibinfo{author}{Brian \surnamestart Marcus\surnameend}
  (\bibinfo{year}{1995}): \emph{\bibinfo{title}{An introduction to symbolic
  dynamics and coding}}.
\newblock \bibinfo{publisher}{Cambridge University Press},
  \bibinfo{address}{Cambridge}, \doi{10.1017/CBO9780511626302}.

\bibitemdeclare{book}{Mi67}
\bibitem{Mi67}
\bibinfo{author}{Marvin~L. \surnamestart Minsky\surnameend}
  (\bibinfo{year}{1967}): \emph{\bibinfo{title}{Computation: finite and
  infinite machines}}.
\newblock \bibinfo{publisher}{Prentice-Hall Inc.}, \bibinfo{address}{Englewood
  Cliffs, N.J.}
\newblock \bibinfo{note}{Prentice-Hall Series in Automatic Computation}.

\bibitemdeclare{book}{Od89}
\bibitem{Od89}
\bibinfo{author}{Piergiorgio \surnamestart Odifreddi\surnameend}
  (\bibinfo{year}{1989}): \emph{\bibinfo{title}{Classical recursion theory}}.
\newblock {\sl \bibinfo{series}{Studies in Logic and the Foundations of
  Mathematics}} \bibinfo{volume}{125}, \bibinfo{publisher}{North-Holland
  Publishing Co.}, \bibinfo{address}{Amsterdam}.
\newblock \bibinfo{note}{The theory of functions and sets of natural numbers,
  With a foreword by G. E. Sacks}.

\bibitemdeclare{article}{Ro71}
\bibitem{Ro71}
\bibinfo{author}{Raphael~M. \surnamestart Robinson\surnameend}
  (\bibinfo{year}{1971}): \emph{\bibinfo{title}{Undecidability and
  nonperiodicity for tilings of the plane}}.
\newblock {\sl \bibinfo{journal}{Invent. Math.}} \bibinfo{volume}{12}, pp.
  \bibinfo{pages}{177--209}.

\end{thebibliography}

\end{document}